\date{}
\newtheorem{theorem}{Theorem}[section]
\newtheorem{corollary}{Corollary}[section]
\newtheorem{lemma}{Lemma}[section]
\newtheorem{remark}{Remark}[section]
\newtheorem{definition}{Definition}[section]
\numberwithin{equation}{section}
\begin{document}
\title{\textbf{The high-order block RIP for non-convex block-sparse compressed sensing}}
\author{$^{a}$Jianwen Huang \qquad$^{b,c}$Xinling Liu\qquad $^{b,c}$Jinyao Hou \qquad $^{c}$Jianjun Wang \thanks{Corresponding author, E-mail: wjjmath@gmail.com, wjj@swu.edu.cn(J.J. Wang), E-mail: hjw1303987297@126.com (J. Huang)}\\
{\small $^{a}$School of Mathematics and Statistics, Tianshui Normal University, Tianshui, 741001, China}\\
{\small $^{b}$School of Mathematics and Statistics, Southwest University, Chongqing, 400715, China}\\
{\small $^c$College of Artificial Intelligence, Southwest
University, Chongqing, 400715, China} }

\maketitle
\begin{quote}
{\bf Abstract.}~~This paper concentrates on the recovery of block-sparse signals, which is not only sparse but also nonzero elements are arrayed into some blocks (clusters) rather than being arbitrary distributed all over the vector, from linear measurements. We establish high-order sufficient conditions based on block RIP to ensure the exact recovery of every block $s$-sparse signal in the noiseless case via mixed $l_2/l_p$ minimization method, and the stable and robust recovery in the case that signals are not accurately block-sparse in the presence of noise. Additionally, a lower bound on necessary number of random Gaussian measurements is gained for the condition to be true with overwhelming probability. Furthermore, the numerical experiments conducted demonstrate the performance of the proposed algorithm.

{\bf Keywords.}~~Compressed sensing; block restricted isometry property; block sparsity; mixed $l_2/l_p$ minimization.
\end{quote}

\section{Introduction}
\label{sec.1}

Block-sparse signal recovery (BSR) appears in some fields of sparse modelling and machine learning, including color imaging \cite{Majumdar and Ward 2010}, equalization of sparse communication channels \cite{Parvaresh et al 2008}, multi-response linear regression \cite{Cotter and Rao 2002} and imagine annotation \cite{Huang J et al 2002} and so forth. Essentially, the important problem in BSR is how to reconstruct a block-sparse or approximately block-sparse signal from a linear system. Commonly, one thinks over the below model:
\begin{align}
\notag y=\Phi x+e,
\end{align}
where $y\in\mathbb{R}^n$ is the observation measurement, $\Phi\in\mathbb{R}^{n\times N}$ is a known measurement matrix (or sensing matrix) with $n<N$, and $e\in\mathbb{R}^n$ is a vector of measurement errors. Generally, the conventional compressed sensing (CS) simply thinks out the sparsity of the signal to be recovered, however it doesn't consider any additional structure, i.e., non-zero elements appear in blocks (or clusters) rather than being arbitrarily spread all over the vector. We call these signals as the block-sparse signals. In order to define block sparsity, we need to give several additional notations. Suppose that the signal $x$ over the block index set $\mathcal{I}=\{d_1,d_2,\cdots,d_M\}$, then we can describe the signal $x$ as
\begin{align}\label{eq.1}
x=[\underbrace{x_1,\cdots,x_{d_1}}_{x[1]},\underbrace{x_{d_1+1},\cdots,x_{d_1+d_2}}_{x[2]},
\cdots,\underbrace{x_{N-d_M+1},\cdots,x_N}_{x[M]}]^{\top},
\end{align}
where $x[i]$ represents the $i$th block of $x$ and $N=\sum^M_{i=1}d_i$. We call a vector $x$ as block $s$-sparse over $\mathcal{I}=\{d_1,d_2,\cdots,d_M\}$ if $x[i]$ is non-zero for no more than $s$ indices $i$ \cite{Eldar and Mishali 2009}. In fact, if the block-sparse structure of signal is neglected, the conventional CS doesn't efficiently treat such structured signal. To reconstruct block-sparse signal, researchers \cite{Eldar and Mishali 2009} \cite{Eldar et al 2010} proposed the following mixed $l_2/l_1$-minimization:
\begin{align}\label{eq.2}
\hat{x}=\arg\min_{\tilde{x}\in\mathbb{R}^N}\|\tilde{x}\|_{2,\mathcal{I}}~\mbox{s.t.}~y-\Phi \tilde{x}\in\mathcal{B},
\end{align}
where $\|x\|_{2,\mathcal{I}}=\sum^M_{i=1}\|x[i]\|_2$ is the mixed $l_2/l_1$ norm of a vector $x$. The set $\mathcal{B}$ stands for some noise structure,
\begin{align}\label{eq.9}
\mathcal{B}^{l_2}(\rho):=\{e:~\|e\|_2\leq\rho\}
\end{align}
and
\begin{align}\label{eq.10}
\mathcal{B}^{DS}(\rho):=\{e:~\|\Phi^{\top}e\|_{\infty}\leq\rho\},
\end{align}
where $\Phi^{\top}$ represents the conjugate transpose of the matrix $\Phi$. (\ref{eq.2}) is a convex optimization issue and could be converted into a second-order cone program, so can be solved efficiently.

To study the theoretical performance of mixed $l_2/l_1$-minimization, Eldar and Mishali \cite{Eldar and Mishali 2009} proposed the definition of block restricted isometry property (block RIP).
\begin{definition}\label{def.1}(Block RIP \cite{Eldar and Mishali 2009})
Given a matrix $\Phi\in\mathbb{R}^{n\times N}$, for every block $s$-sparse $x\in\mathbb{R}^N$ over $\mathcal{I}=\{d_1,d_2,\cdots,d_M\}$, there is a positive number $\delta\in(0,1)$, if
\begin{align}\label{eq.3}
(1-\delta)\|x\|^2_2\leq\|\Phi x\|^2_2\leq(1+\delta)\|x\|^2_2,
\end{align}
then the matrix $\Phi$ obeys the $s$-order block RIP over $\mathcal{I}$. Define the block RIP constant (RIC) $\delta_{s|\mathcal{I}}$ as the smallest positive constant $\delta$ such that (\ref{eq.3}) holds for all $x\in\mathbb{R}^N$ that are block $s$-sparse.
\end{definition}
For the reminder of this paper, for simplicity, $\delta_s$ represents the block RIP constant $\delta_{s|\mathcal{I}}$. Eldar and Mishali \cite{Eldar and Mishali 2009} showed that the mixed $l_2/l_1$-minimization method can exactly reconstruct any block $s$-sparse signal when the measurement matrix $\Phi$ fulfills the block RIP with $\delta_{2s}<0.414$. Later, Lin and Li \cite{Lin and Li 2013} improved the sufficient condition on $\delta_{2s}$ to $0.4931$, and builded condition $\delta_s<0.307$ for accurate recovery. In 2019, the conclusions of literature \cite{Li and Chen 2019} and \cite{Huang JW et al 2019} together present a complete characterization to the block RIP condition on $\delta_{ts}$ that the mixed $l_2/l_1$ minimization method guarantees the block-sparse signal recovery in the field of block-sparse compressed sensing.

Recently, a lot of researchers \cite{Chartrand and Staneva 2008} \cite{Shen and Li 2012} \cite{Lai MJ et al 2013} have revealed that $l_p~(0<p<1)$ minimization not only constantly needs less constrained the RIP requirements, but also could ensure exact recovery for smaller $p$ compared with the $l_1$ minimization. In the present paper, we are interested in investigating the stable reconstruction of block-sparse signals by the mixed $l_2/l_p~(0<p<1)$ minimization as follows:
\begin{align}\label{eq.4}
\hat{x}=\arg\min_{\tilde{x}\in\mathbb{R}^N}\|\tilde{x}\|^p_{2,p}~\mbox{s.t.}~y-\Phi \tilde{x}\in\mathcal{B},
\end{align}
where $\|x\|^p_{2,p}=\sum^M_{i=1}\|x[i]\|^p_2$. Simulation experiments \cite{Majumdar and Ward 2010} \cite{Wang Y et al 2013} indicated that fewer linear measurements are needed for accurate reconstruction when $0<p<1$ than when $p=1$. More related work can be found in literature \cite{Wang Y et al 2014} \cite{Wen JM et al 2019} \cite{Gao Y et al 2017} \cite{Wang JJ et al 2019} \cite{Ge and Chen 2018} \cite{Li and Wen 2019} \cite{Wang WD et al 2017}.

In this paper, we further investigate the high-order block RIP conditions for the exact and stable reconstruction for (approximate) block-sparse signals by mixed $l_2/l_p$ minimization. The crux is extend sparse representation of an $l_p$-polytope (Lemma 2.2 \cite{Zhang and Li 2019}) to the block scenario. With this technique, we obtain a sufficient condition on RIC $\delta_{ts}$ that guarantees the exact and stable reconstruction of approximate block-sparse signals via mixed $l_2/l_p$ minimization, and establish error bounds between the solution to (\ref{eq.4}) and the signal $x$ to be recovered. Obviously, when $x$ is accurately block-sparse and $\mathcal{B}=\{0\}$ (i.e., $y=\Phi x$), we will derive the accurate reconstruction condition. Particularly, we will determine how many random Gaussian measurements suffice for the condition to hold with high probability.

The remainder of the paper is constructed as follows. In Section \ref{sec.2}, we will provide some notations and a few lemmas. In Section \ref{sec.3}, we will present the main results, and the associating proofs are given in Section \ref{sec.4}. In Section \ref{sec.5}, a series of numerical experiments are presented to support our theoretical results. Lastly, the conclusion is drawn in Section \ref{sec.6}.

\section{Preliminaries}
\label{sec.2}

Throughout this article, we use the below notations unless special mentioning. For a subset $T$ in $\mathbb{R}^M$, $T^c$ denotes the complement of $T$ in $\mathbb{R}^M$. For any vector $x\in\mathbb{R}^N$, $x_T$ represents a vector which is equal to $x$ on block indices $T$ and displaces other blocks with zero. Denote $T_0$ by block indices of the $s$ largest block in $l_2$ norm of the vector $x$, i.e., $\|x[i]\|_2\geq\|x[j]\|_2$ for any $i\in T_0$ and $j\in T^c_0$. We represent $x_{\max(s)}$ as $x$ with all but the largest $s$ blocks in $l_2$ norm set to zero. Henceforth, we invariably choose that $h=\hat{x}-x_{\max(s)}$, where $\hat{x}$ is the minimizer of (\ref{eq.4}).

In order to prove our main results, it is necessary to present the below lemma which is a crucial technical tool. Factually, we extend sparse expression of an $l_p$-polytope proposed by \cite{Zhang and Li 2019} to the block context.

\begin{lemma}\label{lem.1}

For a positive integer $s$, a positive number $\alpha$ and given $p\in(0,1]$, define the block $l_p$-polytope $T(\alpha,s,p)\in\mathbb{R}^N$ by
\begin{align}\notag
T(\alpha,s,p)=\{x\in\mathbb{R}^N:~\|x\|^p_{2,p}\leq s\alpha^p,~\|x\|_{2,\infty}\leq \alpha\}.
\end{align}
Then any $x\in T(\alpha,s,p)$ can be expressed as the convex combination of block $s$-sparse vectors, i.e.,
\begin{align}\notag
x=\sum_i\lambda_iu_i.
\end{align}
Here $\lambda_i>0$ and $\sum_i\lambda_i=1$ and $\|u_i\|_{2,0}\leq s$. In addition,
\begin{align}\label{eq.5}
\sum_i\lambda_i\|u_i\|^2_{2,2}\leq\alpha^p\|x\|^{2-p}_{2,2-p}.
\end{align}
\end{lemma}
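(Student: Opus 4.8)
The plan is to reduce the block statement to a covering problem on the profile of block $l_2$-norms and then rescale each block. Writing $a_j=\|x[j]\|_2$ for $j=1,\dots,M$, the hypotheses $\|x\|_{2,p}^p\le s\alpha^p$ and $\|x\|_{2,\infty}\le\alpha$ read simply as $\sum_j a_j^p\le s\alpha^p$ and $\max_j a_j\le\alpha$. I would introduce the coverage weights $w_j=(a_j/\alpha)^p$, so that the $l_\infty$ bound yields $0\le w_j\le1$ while the $l_p$ bound yields the budget $\sum_j w_j=\alpha^{-p}\sum_j a_j^p\le s$.

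The core of the argument is the claim that any $w\in[0,1]^M$ with $\sum_j w_j\le s$ lies in the convex hull of the $0/1$ indicator vectors $\mathbf 1_S$ over index sets $S$ with $|S|\le s$; that is, $w=\sum_i\lambda_i\mathbf 1_{S_i}$ with $\lambda_i>0$, $\sum_i\lambda_i=1$ and $|S_i|\le s$. I would prove this either by invoking the integrality of the capped simplex $\{w:0\le w\le1,\ \sum_j w_j\le s\}$ (whose vertices are exactly the $0/1$ vectors with at most $s$ ones because $s$ is an integer) followed by Carath\'eodory's theorem, or more constructively by a greedy peeling procedure that repeatedly removes a support of size at most $s$ with a suitable weight. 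Since the empty set (the zero indicator) is admissible, any slack in $\sum_j w_j<s$ is absorbed and the coefficients genuinely sum to one.

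Given the data $\{(\lambda_i,S_i)\}$, I would define block $s$-sparse vectors $u_i$ by $u_i[j]=x[j]/w_j$ for $j\in S_i$ and $u_i[j]=0$ otherwise, excluding from every $S_i$ the blocks with $a_j=0$ (hence $w_j=0$) so that no division by zero occurs; then $\|u_i\|_{2,0}\le|S_i|\le s$. For each block $j$ in the support of $x$ one has $\sum_{i:\,j\in S_i}\lambda_i=w_j$, whence $\sum_i\lambda_i u_i[j]=(x[j]/w_j)\sum_{i:\,j\in S_i}\lambda_i=x[j]$, giving $x=\sum_i\lambda_i u_i$. The same marginal bookkeeping proves the $l_2$ estimate, in fact with equality:
\[
\sum_i\lambda_i\|u_i\|_{2,2}^2=\sum_j\frac{\|x[j]\|_2^2}{w_j}=\alpha^p\sum_j\|x[j]\|_2^{2-p}=\alpha^p\|x\|_{2,2-p}^{2-p},
\]
which implies (\ref{eq.5}).

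I expect the main obstacle to be the combinatorial decomposition step — establishing the integral structure of the capped simplex (or the termination and correctness of the greedy peeling) — together with the careful treatment of vanishing blocks. By contrast, the block structure itself introduces little extra difficulty: because each block is scaled as a single unit by the scalar $1/w_j$, the entire argument is driven by the scalar profile $(a_j)$, and the original $l_p$-polytope decomposition of \cite{Zhang and Li 2019} is recovered when every block has length one.
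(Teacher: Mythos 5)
Your proof is correct, but it takes a genuinely different route from the paper's. The paper follows the inductive scheme of Zhang and Li \cite{Zhang and Li 2019}: assuming the claim for block $(l-1)$-sparse vectors, it writes a block $l$-sparse $x$ as a convex combination $x=\sum_w \xi_w y_w$ of block $(l-1)$-sparse vectors $y_w$, constructed explicitly through the index set $\Gamma$ in (\ref{eq.6}), and then invokes the induction hypothesis on each $y_w$; the block $s$-sparse decomposition is thus produced by peeling one sparsity level at a time. You instead collapse the problem to the scalar profile $w_j=(\|x[j]\|_2/\alpha)^p$, observe that $w$ lies in the capped simplex $\{w\in[0,1]^M:\ \sum_j w_j\le s\}$, and use the integrality of that polytope (it is the independent-set polytope of the uniform matroid, with totally unimodular constraint system; your greedy-peeling alternative works too), so that $w=\sum_i\lambda_i\mathbf{1}_{S_i}$ with $|S_i|\le s$; block-wise rescaling $u_i[j]=x[j]/w_j$ for $j\in S_i$ then produces all the $u_i$ in one shot. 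Your bookkeeping is sound: the marginal identity $\sum_{i:\,j\in S_i}\lambda_i=w_j$ yields both $x=\sum_i\lambda_iu_i$ and the second-moment computation, zero blocks are automatically excluded once all $\lambda_i>0$, and you obtain (\ref{eq.5}) with equality, which is slightly stronger than the stated inequality. What each approach buys: yours is non-inductive and structurally transparent, separating the combinatorial content (integrality of the capped simplex, which needs $s\in\mathbb{Z}$, exactly as the paper needs $ts$ integral in Theorem \ref{the.1}) from the analytic rescaling, and it pinpoints the sharp constant; its one external ingredient is the polyhedral integrality fact, which you must cite or prove to keep the argument self-contained. The paper's induction is entirely elementary, using nothing beyond convexity manipulations, at the price of heavier notation and a construction that obscures why the bound (\ref{eq.5}) holds.
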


\begin{proof}
We can prove the assertion holds by induction. If $x$ is block $s$-sparse, we can set $u_1=x$ and $\lambda_1=1$, then $\|u_1\|^2_{2,2}=\|x\|^2_{2,2}\leq\alpha^p\|x\|^{2-p}_{2,2-p}$. Suppose that assertion holds for all block $(l-1)$-sparse vectors $x$ ($l-1\geq s$). Then for any block $l$-sparse vectors $x$ such that $\|x\|^p_{2,p}\leq s\alpha^p$ and $\|x\|_{2,\infty}\leq \alpha$, without loss of generality suppose that $x$ is not block $(l-1)$-sparse (otherwise the statement is naturally true by assumption of $l-1$). Besides, $x$ can be represented as $x=\sum^l_{i=1}c_iE_i$, where $c_1\geq c_2\geq\cdots\geq c_l>0$, $c_1$ is equal to the largest $\|x[i]\|_2$ for every $i\in\{1,2,\cdots,M\}$, $c_2$ is equal to the next largest $\|x[i]\|_2$, etc. Here $E_i$ denotes a unit vector in $\mathbb{R}^N$, which is equal to $x/c_i$ on the $i$th largest block of $x$ and zero other places. Set $c_0=\alpha$, and fix $a=(a_1,a_2,\cdots,a_l)\in\mathbb{R}^l_+$, where $a_i=c_i^{p-1},~i=0,1,\cdots,l$. Then we have $\sum^l_{i=1}a_ic_i\leq s\alpha^p$ and $\alpha^p=a_0c_0\geq a_1c_1\geq\cdots\geq a_lc_l.$

Denote the set
\begin{align}\label{eq.6}
\Gamma=\{1\leq j\leq l-1:~\sum^l_{i=j}a_ic_i\leq (l-j)a_{j-1}c_{j-1}\}.
\end{align}
It is easy to see that $1\in\Gamma$, hence $\Gamma$ is not empty. Then we note that $j=\max\Gamma$, which implies
\begin{align}\notag
\sum^l_{i=j}a_ic_i&\leq (l-j)a_{j-1}c_{j-1},\\
\sum^l_{i=j+1}a_ic_i&> (l-j-1)a_{j}c_{j}.
\end{align}
It follows that
\begin{align}\notag
(l-j)a_{j}c_{j}<\sum^l_{i=j}a_ic_i\leq(l-j)a_{j-1}c_{j-1}.
\end{align}
Set
\begin{align}\notag
y_w&=\sum^{j-1}_{i=1}c_iE_i+\frac{\sum^l_{i=j}a_ic_i}{l-j}\sum^l_{i=j,i\neq w}a^{-1}_iE_i,\\
\notag \xi_w&=1-\frac{l-j}{\sum^l_{i=j}a_ic_i}a_wc_w,
\end{align}
where $w=j,j+1,\cdots,l$. Then by simple calculations, we obtain $\sum^l_{w=j}\xi_w=1$ and $x=\sum^l_{w=j}\xi_wy_w$, where $y_w$ is block $(l-1)$-sparse for all $w=j,j+1,\cdots,l$. Finally, since $y_w$ is block $(l-1)$-sparse, under the induction assumption, we have $y_w=\sum_i\mu_{w,i}u_{w,i}$, where $u_{w,i}$ is block $s$-sparse, and $\mu_{w,i}\in[0,1]$, $\sum_i\mu_{w,i}=1$. Hence, $x=\sum_i\sum^l_{i=j}mu_{w,i}u_{w,i}$, which implies that statement is true for $l$.
\end{proof}

We will utilize the below lemma in the process of proving the main conclusions, which is a useful important inequality.
\begin{lemma}\label{lem.2}(Lemma 5.3 \cite{Cai and Zhang 2013})
Suppose that $M\geq s$, $a_1\geq a_2\geq\cdots\geq a_M\geq 0$, $\sum^s_{i=1}a_i\geq\sum^M_{i=s+1}a_i$, then for all $\alpha\geq 1$,
\begin{align}\notag
\sum^M_{j=s+1}a^{\alpha}_j\leq\sum^s_{i=1}a^{\alpha}_i.
\end{align}
More generally, assume that $a_1\geq a_2\geq\cdots\geq a_M\geq 0$, $\lambda\geq0$ and $\sum^s_{i=1}a_i+\lambda\geq\sum^M_{i=s+1}a_i$, then for all $\alpha\geq 1$,
\begin{align}\notag
\sum^M_{j=s+1}a^{\alpha}_j\leq s\left(\sqrt[\alpha]{\frac{\sum^s_{i=1}a^{\alpha}_i}{s}}
+\frac{\lambda}{s}\right)^{\alpha}.
\end{align}
\end{lemma}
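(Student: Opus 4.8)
The plan is to prove the two assertions in turn, deriving the general (second) inequality from the first by a reduction that replaces the top $s$ entries with an equalized surrogate.

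For the first inequality I would exploit the monotonicity of the sequence to factor a common power out of both sides. Since $a_1\ge a_2\ge\cdots\ge a_M\ge 0$ and $\alpha\ge 1$, for every $j\ge s+1$ we have $a_j\le a_s$, hence $a_j^{\alpha}=a_j^{\alpha-1}a_j\le a_s^{\alpha-1}a_j$; summing over $j>s$ and invoking the hypothesis $\sum_{i=1}^{s}a_i\ge\sum_{j=s+1}^{M}a_j$ gives
\[
\sum_{j=s+1}^{M}a_j^{\alpha}\le a_s^{\alpha-1}\sum_{j=s+1}^{M}a_j\le a_s^{\alpha-1}\sum_{i=1}^{s}a_i.
\]
On the other hand, for $i\le s$ we have $a_i\ge a_s$, so $a_i^{\alpha}\ge a_s^{\alpha-1}a_i$, and summing yields $a_s^{\alpha-1}\sum_{i=1}^{s}a_i\le\sum_{i=1}^{s}a_i^{\alpha}$. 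Chaining the two displays closes the first part. (The degenerate case $\alpha=1$ is the hypothesis itself, and $a_s=0$ is handled with the convention $0^{\alpha-1}=0$.)

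For the general inequality I would reduce to the first one by introducing the $\ell_\alpha$ power mean $c:=\bigl(\tfrac1s\sum_{i=1}^{s}a_i^{\alpha}\bigr)^{1/\alpha}$ of the top block and defining an auxiliary sequence $\tilde a$ whose first $s$ entries are all equal to $c+\lambda/s$ and whose remaining entries are $\tilde a_j=a_j$ for $j>s$. The target bound is exactly $s(c+\lambda/s)^{\alpha}=\sum_{i=1}^{s}\tilde a_i^{\alpha}$, while $\sum_{j>s}\tilde a_j^{\alpha}=\sum_{j>s}a_j^{\alpha}$, so it suffices to verify that $\tilde a$ satisfies the hypotheses of the first part. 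Two facts from the power-mean inequality (valid since $\alpha\ge1$) do this: first, $c\ge a_s\ge a_{s+1}$ because the power mean dominates the minimum, so $c+\lambda/s$ is no smaller than $a_{s+1}$ and the equalized entries really are the $s$ largest of $\tilde a$; second, $c\ge\tfrac1s\sum_{i=1}^{s}a_i$ because the power mean dominates the arithmetic mean, whence $\sum_{i=1}^{s}\tilde a_i=sc+\lambda\ge\sum_{i=1}^{s}a_i+\lambda\ge\sum_{j>s}a_j=\sum_{j>s}\tilde a_j$. Applying the first part to $\tilde a$ then delivers the claim.

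The routine portions are the algebraic manipulations; the one step deserving care is the reduction in the last paragraph, specifically checking that the equalized sequence is still admissible for the first inequality. This is where both directions of the power-mean comparison are needed: dominating the minimum keeps the surrogate sorted so that the ``top $s$'' really are the equalized coordinates, and dominating the arithmetic mean preserves the slack in the constraint $\sum_{i=1}^{s}a_i+\lambda\ge\sum_{j>s}a_j$. If either comparison failed one could not legitimately invoke the first part, so verifying these two inequalities is the crux.
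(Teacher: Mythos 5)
Your proof is correct, but note that the paper itself contains no proof of this statement: it is imported verbatim as Lemma 5.3 of \cite{Cai and Zhang 2013}, so the comparison must be with the original argument there. Your first part coincides with the standard one: factoring $a_s^{\alpha-1}$ out of the tail sum and reabsorbing it into the head sum is exactly how the $\lambda=0$ case is usually handled, and your treatment of the degenerate cases is fine. Your second part takes a genuinely different route. The original reduction adds $\lambda/s$ to each of the top $s$ entries, applies the first part to the perturbed (still sorted, still admissible) sequence, and then needs Minkowski's inequality in $\ell^{\alpha}$ to pass from $\sum_{i=1}^{s}(a_i+\lambda/s)^{\alpha}$ to the stated bound $s\bigl(\bigl(\tfrac{1}{s}\sum_{i=1}^{s}a_i^{\alpha}\bigr)^{1/\alpha}+\lambda/s\bigr)^{\alpha}$. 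You instead equalize the top block at the level $c+\lambda/s$, where $c$ is the $\alpha$-power mean of $a_1,\dots,a_s$, so that the target bound is \emph{literally} $\sum_{i=1}^{s}\tilde a_i^{\alpha}$ for the surrogate sequence and no post-processing inequality is required; the price is that admissibility of the surrogate must be checked, and you correctly identify and verify the two needed comparisons, $c\geq a_s\geq a_{s+1}$ (power mean dominates the minimum, keeping $\tilde a$ sorted) and $c\geq\tfrac{1}{s}\sum_{i=1}^{s}a_i$ (power mean dominates the arithmetic mean, preserving the sum constraint). The trade-off is that your argument replaces Minkowski by Jensen (convexity of $x\mapsto x^{\alpha}$), which is arguably more elementary, in exchange for a slightly less obvious auxiliary construction; both proofs are complete and of comparable length.
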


In view of the definition of $h$, $\hat{x}$ and $x_{\max(s)}$, we get the below lemma.
\begin{lemma}\label{lem.3}
Recall that $h=\hat{x}-x_{\max(s)}$, where $\hat{x}$ is the solution to (\ref{eq.4}). It holds that
\begin{align}\notag
\|h_{-\max(s)}\|^p_{2,p}\leq\|h_{\max(s)}\|^p_{2,p}.
\end{align}
\end{lemma}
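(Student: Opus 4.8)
The plan is to read the stated inequality as the standard cone (null-space style) condition and to derive it directly from the optimality of $\hat{x}$ in the program (\ref{eq.4}), combined with the sub-additivity of the map $t\mapsto t^p$ on $[0,\infty)$ for $0<p\le 1$. First I would record the feasibility of the underlying signal: since the measurement error satisfies $e=y-\Phi x\in\mathcal{B}$, the true signal $x$ lies in the feasible set of (\ref{eq.4}), so minimality of $\hat{x}$ gives
\begin{align}\notag
\|\hat{x}\|^p_{2,p}\le\|x\|^p_{2,p}.
\end{align}
Writing $\hat{x}=x_{\max(s)}+h$ and expanding the mixed quasi-norm block-by-block over $T_0$ (the support of the $s$ largest blocks of $x$) and its complement is the natural next move, since $x_{\max(s)}$ is supported on $T_0$ while on $T_0^c$ one has simply $\hat{x}[i]=h[i]$.

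The key algebraic tool is that for $0<p\le 1$ and any two blocks $a,b$ one has $\|a+b\|^p_2\le\|a\|^p_2+\|b\|^p_2$, and hence the reverse estimate $\|a+b\|^p_2\ge\|a\|^p_2-\|b\|^p_2$. Applying this on each block $i\in T_0$ with $a=x[i]$, $b=h[i]$, I would lower-bound $\|\hat{x}\|^p_{2,p}$ by $\|x_{T_0}\|^p_{2,p}-\|h_{T_0}\|^p_{2,p}+\|h_{T_0^c}\|^p_{2,p}$. Feeding this into the optimality inequality and cancelling the common term $\|x_{T_0}\|^p_{2,p}$ yields
\begin{align}\notag
\|h_{T_0^c}\|^p_{2,p}\le\|h_{T_0}\|^p_{2,p}+\|x_{T_0^c}\|^p_{2,p}.
\end{align}
To pass from the fixed index set $T_0$ to the "$\max(s)$" restrictions appearing in the statement, I would use that the $s$ largest blocks of $h$ maximize the mixed quasi-norm among all $s$-block restrictions, so that $\|h_{T_0}\|^p_{2,p}\le\|h_{\max(s)}\|^p_{2,p}$ and, complementarily, $\|h_{-\max(s)}\|^p_{2,p}\le\|h_{T_0^c}\|^p_{2,p}$; chaining these gives $\|h_{-\max(s)}\|^p_{2,p}\le\|h_{\max(s)}\|^p_{2,p}+\|x_{T_0^c}\|^p_{2,p}$.

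The point to handle carefully is the tail term $\|x_{T_0^c}\|^p_{2,p}=\|x_{-\max(s)}\|^p_{2,p}$: it vanishes exactly when $x$ is genuinely block $s$-sparse, which is the regime in which the clean bound $\|h_{-\max(s)}\|^p_{2,p}\le\|h_{\max(s)}\|^p_{2,p}$ holds verbatim (for only approximately block-sparse $x$ this tail would persist and be carried through the subsequent error analysis). The other subtlety, and really the crux, is that $\|\cdot\|_{2,p}$ is a quasi-norm rather than a norm, so the ordinary triangle inequality is unavailable; one must instead invoke sub-additivity of $t\mapsto t^p$ to produce the reverse block estimate $\|a+b\|^p_2\ge\|a\|^p_2-\|b\|^p_2$, which is what drives the whole cancellation. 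I expect this replacement of the triangle inequality by sub-additivity, rather than any combinatorial difficulty, to be the main thing to get right.
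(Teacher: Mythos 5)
Your proof has a genuine gap, and it is exactly the point you flag at the end: for general (only approximately block-sparse) $x$ the tail term does not vanish, so what you actually prove is the weaker inequality $\|h_{-\max(s)}\|^p_{2,p}\leq\|h_{\max(s)}\|^p_{2,p}+\|x_{-\max(s)}\|^p_{2,p}$, not the lemma. The lemma is stated and used for arbitrary signals: in the proof of Theorem \ref{the.1} the clean inequality is precisely what certifies, via (\ref{eq.14}), that $h_{-\max(s)}$ lies in the block $l_p$-polytope $T\bigl(\alpha/(t-1)^{1/p},(t-1)s,p\bigr)$ with $\alpha^p=\|h_{\max(s)}\|^p_{2,p}/s$, so that Lemma \ref{lem.1} applies; an extra additive tail destroys that membership, and deferring it to ``the subsequent error analysis'' is not available because the polytope argument needs the clean bound up front.

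The missing idea is the choice of comparison point in the optimality step. The paper does not compare $\hat{x}$ with $x$; it compares $\hat{x}$ with $x_{\max(s)}$, which is exactly block $s$-sparse, so no tail ever arises. Feasibility of $x_{\max(s)}$ is precisely what the enlarged noise-ball hypothesis buys: in Theorem \ref{the.1}, $\|y-\Phi x_{\max(s)}\|_2\leq\|y-\Phi x\|_2+\|\Phi(x-x_{\max(s)})\|_2\leq\rho+\sigma(\Phi)\|x_{-\max(s)}\|_2\leq\varepsilon$, hence $y-\Phi x_{\max(s)}\in\mathcal{B}^{l_2}(\varepsilon)$ (and analogously with $\sigma^2(\Phi)$ for $\mathcal{B}^{DS}(\varepsilon)$ in Theorem \ref{the.2}). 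Minimality then gives $\|\hat{x}\|^p_{2,p}\leq\|x_{\max(s)}\|^p_{2,p}=\|x_{T_0}\|^p_{2,p}$, and your own block-wise computation --- which otherwise matches the paper's: reverse $p$-triangle inequality on $T_0$, exact splitting on $T^c_0$, then passing from $T_0$, $T^c_0$ to the $\max(s)$ restrictions of $h$ --- yields $\|h_{T^c_0}\|^p_{2,p}\leq\|h_{T_0}\|^p_{2,p}$ with no tail, and hence the lemma. So your sub-additivity mechanics are right; the flaw is solely in taking $x$ rather than $x_{\max(s)}$ as the feasible competitor, which is also the reason the theorems assume $\rho+\sigma(\Phi)\|x_{-\max(s)}\|_2\leq\varepsilon$ rather than merely $\rho\leq\varepsilon$.
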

\begin{proof}
Assume that $T_0$ is the block index set over $s$ blocks with largest $l_2$ norm of the vector $x$. Therefore, $x_{T_0}=x_{\max(s)}$. By applying the minimality of the solution $\hat{x}$ and the reverse triangular inequality of $\|\cdot\|^p_{2,p}$, we get
\begin{align}\notag
\|x_{T_0}\|^p_{2,p}\geq\|\hat{x}\|^p_{2,p}&=\|x_{T_0}+h_{T_0}\|^p_{2,p}+\|h_{T^c_0}\|^p_{2,p}\\
\notag&\geq\|x_{T_0}\|^p_{2,p}-\|h_{T_0}\|^p_{2,p}+\|h_{T^c_0}\|^p_{2,p},
\end{align}
which implies
\begin{align}\notag
\|h_{T^c_0}\|^p_{2,p}\leq\|h_{T_0}\|^p_{2,p}.
\end{align}
Note that $\|h_{-\max(s)}\|^p_{2,p}\leq\|h_{T^c_0}\|^p_{2,p}$ and $\|h_{T_0}\|^p_{2,p}\leq\|h_{\max(s)}\|^p_{2,p}$. Combining with the above inequalities, the desired result can be derived.
\end{proof}

\begin{lemma}\label{lem.4}(Lemma 5.1 \cite{Baraniuk et al 2008})
Let $\Phi\in\mathbb{R}^{n\times N}$ be a random matrix whose entries obey one of the distributions given by (\ref{eq.33}) and that fulfills the concentration inequality
\begin{align}\label{eq.34}
\mathbb{P}(|\|\Phi x\|^2_2-\|x\|^2_2|\geq\epsilon\|x\|^2_2)\leq2e^{-nc_0(\epsilon)},~\epsilon\in(0,1),
\end{align}
where the probability is taken over all $n\times N$ matrices $\Phi$ and $c_0(\epsilon)$ is a constant relying merely on $\epsilon$ and such that for all $\epsilon\in(0,1)$, $c_0(\epsilon)>0$. Suppose that $1\leq s\leq n$. Then, for any $\delta\in(0,1)$, we have
\begin{align}\label{eq.35}
(1-\delta)\|x\|^2_2\leq\|\Phi x\|^2_2\leq(1+\delta)\|x\|^2_2
\end{align}
for all $s$-sparse vectors $x\in\mathbb{R}^N$ with probability
\begin{align}\label{eq.36}
\geq 1-2\left(\frac{12}{\delta}\right)^se^{-c_0\left(\frac{\delta}{2}\right)n}.
\end{align}
\end{lemma}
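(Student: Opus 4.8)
The plan is to establish the two-sided bound by a standard covering (net) argument: control the quadratic form $x^\top\Phi^\top\Phi x$ on a finite net of the relevant unit sphere through the concentration hypothesis (\ref{eq.34}) and a union bound, and then propagate the estimate to the whole sphere. Note first that the stated probability carries only the single factor $(12/\delta)^s$ (and not a combinatorial factor $\binom{N}{s}$), so the assertion is to be read for all vectors supported on one fixed coordinate set $T$ of size $s$; I denote the $s$-dimensional subspace of such vectors by $X_T$. By homogeneity---both $\|x\|_2^2$ and $\|\Phi x\|_2^2$ scale identically in $x$---it suffices to prove $\bigl|\|\Phi x\|_2^2-1\bigr|\leq\delta$ for every $x$ in the unit sphere $\Sigma_T=\{x\in X_T:\|x\|_2=1\}$. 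Equivalently, writing $B$ for the symmetric operator $(\Phi^\top\Phi-I)$ restricted to $X_T$, the goal is $R:=\sup_{x\in\Sigma_T}|\langle Bx,x\rangle|\leq\delta$.

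First I would fix a $\rho$-net $Q\subset\Sigma_T$ of radius $\rho=\delta/4$, meaning every $x\in\Sigma_T$ lies within Euclidean distance $\rho$ of some $q\in Q$. A volumetric estimate bounds the cardinality by $|Q|\leq(1+2/\rho)^s=(1+8/\delta)^s\leq(12/\delta)^s$, the last inequality holding because $\delta<1$. Next I would apply (\ref{eq.34}) with $\epsilon=\delta/2$ to each fixed $q\in Q$ and take a union bound: the event that some $q\in Q$ violates $\bigl|\|\Phi q\|_2^2-\|q\|_2^2\bigr|\leq(\delta/2)\|q\|_2^2$ has probability at most $2|Q|e^{-nc_0(\delta/2)}\leq2(12/\delta)^se^{-nc_0(\delta/2)}$. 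Hence, with probability at least $1-2(12/\delta)^se^{-nc_0(\delta/2)}$---precisely the claimed bound---one has $|\langle Bq,q\rangle|=\bigl|\|\Phi q\|_2^2-1\bigr|\leq\delta/2$ for every $q\in Q$.

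It remains to transfer the estimate from the net $Q$ to all of $\Sigma_T$, which is the crux of the argument and the place where the constants must be balanced exactly. For a unit $x\in\Sigma_T$, choose $q\in Q$ with $\|x-q\|_2\leq\rho$ and split
\begin{align}\notag
\langle Bx,x\rangle-\langle Bq,q\rangle=\langle B(x-q),x\rangle+\langle Bq,x-q\rangle,
\end{align}
so that, using $\|B\|_{\mathrm{op}}=R$ (valid since $B$ is symmetric, where $R$ equals the numerical radius) together with $\|x\|_2=\|q\|_2=1$, the right-hand side is bounded in absolute value by $2R\rho$. Therefore $|\langle Bx,x\rangle|\leq\delta/2+2R\rho$; taking the supremum over $x\in\Sigma_T$ (attained, as $\Sigma_T$ is compact and $B$ continuous) yields $R\leq\delta/2+2R\rho$, i.e. $R(1-2\rho)\leq\delta/2$. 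With $\rho=\delta/4$ this gives $R\leq(\delta/2)/(1-\delta/2)\leq\delta$ for all $\delta\in(0,1)$, which is exactly $(1-\delta)\|x\|_2^2\leq\|\Phi x\|_2^2\leq(1+\delta)\|x\|_2^2$ on $X_T$.

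The main obstacle is this final balancing of constants: the net radius $\rho$ must be small enough that the extension factor $1/(1-2\rho)$ does not inflate the per-point tolerance $\delta/2$ beyond $\delta$, yet large enough that the cardinality bound stays at $(12/\delta)^s$ and the per-point concentration can be invoked with $\epsilon=\delta/2$. The choices $\rho=\delta/4$ and $\epsilon=\delta/2$ are what make all three requirements close simultaneously; verifying the two elementary inequalities $1+8/\delta\leq12/\delta$ and $(\delta/2)/(1-\delta/2)\leq\delta$ for $\delta\in(0,1)$ is the only genuine computation.
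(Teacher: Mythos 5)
Your proof is correct. Note that the paper itself gives no proof of this lemma---it is quoted verbatim (with citation) from Lemma 5.1 of Baraniuk, Davenport, DeVore and Wakin (2008)---so the comparison here is against that source. Your argument is the same standard covering argument: a $\delta/4$-net of cardinality at most $(12/\delta)^s$, the concentration inequality (\ref{eq.34}) at level $\epsilon=\delta/2$ plus a union bound over the net, and an extension step from the net to the whole sphere. The one place you genuinely deviate is the extension step: the original proof bootstraps on the norm level, defining $A$ as the smallest constant with $\|\Phi x\|_2\leq\sqrt{1+A}\,\|x\|_2$ on $X_T$ and deriving $\sqrt{1+A}\leq\sqrt{1+\delta/2}+(\delta/4)\sqrt{1+A}$, then handling the lower bound separately; you instead work with the quadratic form of the symmetric operator $B=\Phi^{\top}\Phi-I$ on $X_T$ and use the identity $\|B\|_{\mathrm{op}}=\sup_{\|x\|_2=1}|\langle Bx,x\rangle|$, which yields both bounds at once from $R(1-2\rho)\leq\delta/2$. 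This is a clean, slightly tidier variant; the constants close for the same reason in both versions. You were also right to flag the reading of the statement: as literally written in the paper (``for all $s$-sparse vectors $x\in\mathbb{R}^N$'' with only the factor $(12/\delta)^s$ and no $\binom{N}{s}$ term), the claim is stronger than what the cited Lemma 5.1 asserts and than what the net argument proves; the correct interpretation is for vectors on a fixed support set $T$ of size $s$, and this is indeed how the paper uses the lemma in the proof of Theorem \ref{the.4}, where the binomial factor $\binom{M}{s}$ is introduced by a further union bound over block supports.
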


\section{Main Results}
\label{sec.3}

Based on the knowledge prepared above, we present the main results in this part-a high-order block RIP condition for the robust reconstruction of arbitrary signals with block structure via mixed $l_2/l_p$ minimization. In the case that the signal to be recovered is block-sparse, the condition can respectively guarantee the accurate construction and stable recovery in the noise-free case and in the noise situation. When the original signal $x$ is not block-sparse and the linear measurement is corrupted by noise, the below result presents a sufficient condition for recovery of structured signals.
\begin{theorem}\label{the.1}
Let $y=\Phi x+e$ be noisy measurements of a signal $x\in\mathbb{R}^N$ with $y,~e\in\mathbb{R}^n$, $\Phi\in\mathbb{R}^{n\times N}~(n<N)$ and $\|e\|_2\leq\rho$. Assume that $\mathcal{B}=\mathcal{B}^{l_2}(\varepsilon)$ with $\rho +\sigma(\Phi)\|x_{-\max(s)}\|_2\leq\varepsilon$ in (\ref{eq.4}). If $\Phi$ satisfies the block RIP with
\begin{align}\label{eq.7}
\delta_{ts}<\frac{\mu}{\frac{2-p}{t-1}-\mu}:=\phi(t,p)
\end{align}
for some $1<t\leq 2$, where $\mu\in[(\sqrt{1+2p-p^2}-1)/p,(1-(t-\sqrt{t^2-t})p)/(t-1)]$ is the sole positive solution of the equation
\begin{align}\label{eq.8}
g(\mu,p)=\frac{p}{2}\mu^{\frac{2}{p}}+\mu-\frac{2-p}{2(t-1)}.
\end{align}
Then the solution $\hat{x}^{l_2}$ to (\ref{eq.4}) fulfills
\begin{align}\label{eq.12}
\|\hat{x}^{l_2}-x\|_2\leq C_1(\varepsilon+\rho)+C_2\|x_{-\max(s)}\|_2,
\end{align}
where
\begin{align}\notag
C_1&=\sqrt{2}\left(\frac{\phi(t,p)}{\phi(t,p)-\delta_{ts}}\frac{(2-p)(1-(t-1)\mu)}{2-p-(t-1)\mu}
\sqrt{1+\delta_{ts}}+\phi(t,p)\sqrt{\frac{1-p}{\phi(t,p)-\delta_{ts}}}\right),\\
\notag C_2&=\sqrt{2}\sigma(\Phi)\left(\frac{\phi(t,p)}{\phi(t,p)-\delta_{ts}}\frac{(2-p)(1-(t-1)\mu)}{2-p-(t-1)\mu}
\sqrt{1+\delta_{ts}}+\phi(t,p)\sqrt{\frac{1-p}{\phi(t,p)-\delta_{ts}}}\right)+1.
\end{align}
\end{theorem}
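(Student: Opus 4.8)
The plan is to bound $h=\hat{x}-x_{\max(s)}$ throughout and to pass to $\hat{x}-x$ only at the very end. First I would extract the two standard ingredients. For the \emph{tube} (data-fidelity) constraint, observe that $x$ is feasible for (\ref{eq.4}), since $\|y-\Phi x\|_2=\|e\|_2\le\rho\le\varepsilon$, and so is $\hat{x}$ by construction; writing $\Phi h=(\Phi\hat{x}-y)-(\Phi x-y)+\Phi x_{-\max(s)}$ and using $\|\Phi x_{-\max(s)}\|_2\le\sigma(\Phi)\|x_{-\max(s)}\|_2$ gives
\begin{align}\notag
\|\Phi h\|_2\le\varepsilon+\rho+\sigma(\Phi)\|x_{-\max(s)}\|_2\le 2\varepsilon.
\end{align}
For the \emph{cone} (objective) constraint I would invoke Lemma~\ref{lem.3}, i.e. $\|h_{-\max(s)}\|^p_{2,p}\le\|h_{\max(s)}\|^p_{2,p}$. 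Combined with the ordering inequality $\|h_{-\max(s)}\|_{2,\infty}\le\|h_{\max(s)}\|_{2,p}/s^{1/p}$ (the smallest head block dominates every tail block, and the minimum is below the $p$-mean), these are the only properties of $\hat{x}$ that the remainder uses.

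The decisive step is to feed the tail $h_{-\max(s)}$ into Lemma~\ref{lem.1}, but with the sparsity level set to $(t-1)s$ (rounded up if non-integral, as usual) rather than $s$. Setting $\alpha=\|h_{\max(s)}\|_{2,p}/s^{1/p}$, the cone constraint gives $\|h_{-\max(s)}\|^p_{2,p}\le\|h_{\max(s)}\|^p_{2,p}=s\alpha^p\le (t-1)s\,\alpha^p$ because $t>1$, while the ordering bound gives $\|h_{-\max(s)}\|_{2,\infty}\le\alpha$; hence $h_{-\max(s)}\in T(\alpha,(t-1)s,p)$. Lemma~\ref{lem.1} then writes $h_{-\max(s)}=\sum_i\lambda_iu_i$ with each $u_i$ block $(t-1)s$-sparse, $\lambda_i>0$, $\sum_i\lambda_i=1$, and the crucial quadratic control
\begin{align}\notag
\sum_i\lambda_i\|u_i\|^2_{2,2}\le\alpha^p\|h_{-\max(s)}\|^{2-p}_{2,2-p}.
\end{align}
The payoff is that, $u_i$ and $h_{\max(s)}$ having disjoint supports of sizes $(t-1)s$ and $s$, each combination $h_{\max(s)}\pm u_i$ is block $ts$-sparse, so the order-$ts$ block RIP is available for it.

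Next I would run the Cai--Zhang-type polarization estimate. Using $\langle\Phi h_{\max(s)},\Phi u_i\rangle=\tfrac14\bigl(\|\Phi(h_{\max(s)}+u_i)\|^2_2-\|\Phi(h_{\max(s)}-u_i)\|^2_2\bigr)$ together with $\delta_{ts}$-RIP on both $ts$-sparse terms, I would lower-bound $\langle\Phi h_{\max(s)},\Phi h\rangle=\|\Phi h_{\max(s)}\|^2_2+\sum_i\lambda_i\langle\Phi h_{\max(s)},\Phi u_i\rangle$, inserting a free scaling $\mu$ on the $u_i$ before applying RIP and summing against the convex weights. The quadratic bound above then converts $\sum_i\lambda_i\|u_i\|^2_{2,2}$ into a multiple of $\|h_{\max(s)}\|^2_{2,2}$ once $\|h_{-\max(s)}\|^{2-p}_{2,2-p}$ and $\alpha^p$ are themselves controlled by powers of $\|h_{\max(s)}\|_{2,p}$ via Lemma~\ref{lem.2}. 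Optimizing over the scaling is exactly what produces the transcendental side condition (\ref{eq.8}): the term $\tfrac{p}{2}\mu^{2/p}$ is the footprint of the $l_p$ quadratic bound and $\mu-\tfrac{2-p}{2(t-1)}$ collects the linear contributions. The outcome is an inequality of the schematic form $(\phi(t,p)-\delta_{ts})\|h_{\max(s)}\|_{2,2}\le(\mathrm{const})\sqrt{1+\delta_{ts}}\,\|\Phi h\|_2$, which is solvable precisely under the hypothesis $\delta_{ts}<\phi(t,p)$.

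Finally I would assemble the bound. The polytope decomposition with Lemma~\ref{lem.2} controls $\|h_{-\max(s)}\|_{2,2}$ by a multiple of $\|h_{\max(s)}\|_{2,2}$, so $\|h\|_2\le\|h_{\max(s)}\|_{2,2}+\|h_{-\max(s)}\|_{2,2}$ is governed by $\|\Phi h\|_2\le\varepsilon+\rho+\sigma(\Phi)\|x_{-\max(s)}\|_2$; tracking the two families of constants yields the coefficient $C_1$ of $(\varepsilon+\rho)$ and the $\sigma(\Phi)$-weighted coefficient of $\|x_{-\max(s)}\|_2$. The triangle inequality $\|\hat{x}-x\|_2\le\|h\|_2+\|x_{-\max(s)}\|_2$ then contributes the extra $+1$ in $C_2$, giving (\ref{eq.12}). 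I expect the genuine obstacle to be the third step: arranging the combination $h_{\max(s)}\pm\mu u_i$ so that the $l_p$ quadratic bound, the order-$ts$ RIP, and Lemma~\ref{lem.2} align to deliver the \emph{sharp} threshold $\phi(t,p)$ rather than a cruder constant, and verifying that the optimizing $\mu$ lies in the stated interval and is the unique positive root of (\ref{eq.8}).
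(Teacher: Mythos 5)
Your plan follows the paper's own route (tube constraint, cone constraint via Lemma \ref{lem.3}, polytope decomposition of the tail at block-sparsity level $(t-1)s$, order-$ts$ block RIP fed through a polarization-type identity, then a quadratic inequality in $\|h_{\max(s)}\|_2$), but it contains a concrete false step exactly where the $t$-dependence must enter. With $\alpha=\|h_{\max(s)}\|_{2,p}/s^{1/p}$ you justify $h_{-\max(s)}\in T(\alpha,(t-1)s,p)$ by the chain $\|h_{-\max(s)}\|^p_{2,p}\le s\alpha^p\le(t-1)s\,\alpha^p$ ``because $t>1$''. But the theorem's range is $1<t\le 2$, so $t-1\le 1$ and the last inequality runs the wrong way: $(t-1)s\,\alpha^p\le s\alpha^p$, strictly for every $t<2$. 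The membership you need in the $(t-1)s$-polytope of radius $\alpha$ is precisely what is not available. The paper's remedy is to inflate the radius: with $\alpha'=\alpha/(t-1)^{1/p}\ge\alpha$ one has $\|h_{-\max(s)}\|_{2,\infty}\le\alpha\le\alpha'$ and $\|h_{-\max(s)}\|^p_{2,p}\le s\alpha^p=(t-1)s(\alpha')^p$, so Lemma \ref{lem.1} applies and yields $\sum_i\lambda_i\|u_i\|^2_{2,2}\le\frac{\alpha^p}{t-1}\|h_{-\max(s)}\|^{2-p}_{2,2-p}$, which is (\ref{eq.15}). The factor $\frac{1}{t-1}$ you lose is not cosmetic: it is the only channel through which $t$ enters the subsequent optimization, and it is what makes the threshold come out as $\phi(t,p)=\mu/(\frac{2-p}{t-1}-\mu)$ with $\mu$ the root of (\ref{eq.8}); with the (unavailable) bound $\alpha^p\|h_{-\max(s)}\|^{2-p}_{2,2-p}$ the constants you would derive are not those of the statement.

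Second, the middle of the argument --- which you yourself flag as the genuine obstacle --- is left as an IOU, and the tool you name is not the one that delivers the result. Plain polarization $\langle\Phi h_{\max(s)},\Phi u_i\rangle=\frac14\left(\|\Phi(h_{\max(s)}+u_i)\|^2_2-\|\Phi(h_{\max(s)}-u_i)\|^2_2\right)$ with a free scaling on $u_i$ is the Cai--Zhang device for $p=1$; applied term by term it produces only integer powers of $\mu$, not the fractional term $\frac{p}{2}\mu^{2/p}$ in (\ref{eq.8}). What the paper actually uses is the $p$-weighted identity (\ref{eq.21}) (from Zhang--Li) applied to $\beta_i=h_{\max(s)}+(t-1)\mu u_i$: both sides are expanded with the order-$ts$ block RIP, the H\"older-type bound (\ref{eq.16}) is substituted, the resulting expression is \emph{maximized over} $\|h_{-\max(s)}\|^2_2$ (this extremization is where $\mu^{2/p}$ is generated), and only then does the root condition (\ref{eq.8}) collapse everything to the quadratic inequality (\ref{eq.27}), whose positive root gives the bound behind $C_1$ and $C_2$. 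Two smaller points: the final assembly should use $\|h\|_2\le\sqrt{2}\|h_{\max(s)}\|_2$ (disjoint supports), not $\|h_{\max(s)}\|_2+\|h_{-\max(s)}\|_2$, or you pay a factor $2$ instead of $\sqrt2$; and the feasibility that Lemma \ref{lem.3} actually needs is that of $x_{\max(s)}$ (i.e.\ $\|y-\Phi x_{\max(s)}\|_2\le\varepsilon$, which is where the hypothesis $\rho+\sigma(\Phi)\|x_{-\max(s)}\|_2\le\varepsilon$ is consumed), not that of $x$.
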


\begin{remark}
In the case of $d_i=1,~i=1,2,\cdots,M$, (\ref{eq.12}) is the same as Theorem 2 in \cite{Chen and Wan 2019}.
\end{remark}

\begin{theorem}\label{the.2}
Let $y=\Phi x+e$ be noisy measurements of a signal $x\in\mathbb{R}^N$ with $y,~e\in\mathbb{R}^n$, $\Phi\in\mathbb{R}^{n\times N}~(n<N)$ and $\|\Phi^{\top}e\|_{\infty}\leq\rho$. Assume that $\mathcal{B}=\mathcal{B}^{DS}(\varepsilon)$ with $\rho +\sigma^2(\Phi)\|x_{-\max(s)}\|_2\leq\varepsilon$ in (\ref{eq.4}). If $\Phi$ satisfies the block RIP with $\delta_{ts}<\phi(t,p)$ for some $1<t\leq 2$, then the solution $\hat{x}^{DS}$ to (\ref{eq.4}) fulfills
\begin{align}\label{eq.13}
\|\hat{x}^{DS}-x\|_2\leq D_1(\varepsilon+\rho)+D_2\|x_{-\max(s)}\|_2,
\end{align}
where
\begin{align}\notag
D_1&=\frac{\sqrt{2ds}\phi(t,p)}{\phi(t,p)-\delta_{ts}}\bigg(\frac{(2-p)(1-(t-1)\mu)}{2-p-(t-1)\mu}
+(1+\sqrt{N-ds})(1-p)\phi(t,p)\bigg),\\
\notag D_2&=\frac{\sqrt{2ds}\phi(t,p)\sigma^2(\Phi)}{\phi(t,p)-\delta_{ts}}\bigg(\frac{(2-p)(1-(t-1)\mu)}{2-p-(t-1)\mu}
+(1+\sqrt{N-ds})(1-p)\phi(t,p)\bigg)+1.
\end{align}
\end{theorem}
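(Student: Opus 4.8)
The proof will run closely parallel to that of Theorem \ref{the.1}, the only genuine difference being how the feasibility (noise) inner products are controlled: here through the $\ell_\infty$-type Dantzig constraint rather than through an $\ell_2$ residual bound. The plan is therefore to isolate a common core estimate on $\|h\|_2$ and to swap out a single block of inequalities.

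First I would fix the geometry. Writing $h=\hat{x}^{DS}-x_{\max(s)}$ and letting $T_0$ index the $s$ largest blocks, I would verify that both $\hat{x}^{DS}$ and $x_{\max(s)}$ lie in $\mathcal{B}^{DS}(\varepsilon)$: feasibility of $\hat{x}^{DS}$ is automatic, while for $x_{\max(s)}$ one writes $y-\Phi x_{\max(s)}=\Phi x_{-\max(s)}+e$ and bounds $\|\Phi^{\top}(y-\Phi x_{\max(s)})\|_\infty\le\sigma^2(\Phi)\|x_{-\max(s)}\|_2+\rho\le\varepsilon$ by hypothesis. Minimality of $\hat{x}^{DS}$ then yields the cone condition of Lemma \ref{lem.3}, namely $\|h_{-\max(s)}\|^p_{2,p}\le\|h_{\max(s)}\|^p_{2,p}$. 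The master Dantzig estimate comes from the identity $\Phi^{\top}\Phi h=-\Phi^{\top}(y-\Phi\hat{x}^{DS})+\Phi^{\top}(y-\Phi x_{\max(s)})$, which by the triangle inequality gives $\|\Phi^{\top}\Phi h\|_\infty\le\varepsilon+\rho+\sigma^2(\Phi)\|x_{-\max(s)}\|_2=:\eta$; I would keep the three summands separate so that the final constants split as $D_1(\varepsilon+\rho)+D_2\|x_{-\max(s)}\|_2$.

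Next I would reproduce the core restricted-isometry argument verbatim from Theorem \ref{the.1}. Choosing $\alpha=(\|h_{T_0}\|^p_{2,p}/s)^{1/p}$, I would check $h_{T^c_0}\in T(\alpha,s,p)$ (using that every tail block is dominated in $\ell_2$-norm by every head block), apply Lemma \ref{lem.1} to write $h_{T^c_0}=\sum_i\lambda_i u_i$ with each $u_i$ block $s$-sparse and $\sum_i\lambda_i\|u_i\|^2_{2,2}\le\alpha^p\|h_{T^c_0}\|^{2-p}_{2,2-p}$, and then form the convex combinations $h_{T_0}\pm(t-1)u_i$. Expanding $\|\Phi(h_{T_0}\pm(t-1)u_i)\|^2_2$, invoking the order-$ts$ block RIP together with the polytope energy bound, and passing from $\|\cdot\|_{2,p}$ and $\|\cdot\|_{2,2-p}$ to $\|\cdot\|_{2,2}$ via Lemma \ref{lem.2} and the power-mean inequality, produces the quadratic inequality in $\|h_{T_0}\|_{2,2}$ whose optimal constant is governed by the root $\mu$ of $g(\mu,p)$ in (\ref{eq.8}) and whose positivity is exactly the threshold $\delta_{ts}<\phi(t,p)$ of (\ref{eq.7}). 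The output is an intermediate estimate of the shape $\|h\|_2\le\frac{\phi(t,p)}{\phi(t,p)-\delta_{ts}}\,(\,\frac{(2-p)(1-(t-1)\mu)}{2-p-(t-1)\mu}\,J_1+(1-p)\phi(t,p)\,J_2\,)$, where $J_1$ and $J_2$ stand for the head feasibility term $\langle\Phi h_{T_0},\Phi h\rangle$ and the tail cross term, respectively.

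The new work, and the main obstacle, is to bound $J_1$ and $J_2$ through the Dantzig constraint rather than through $\|\Phi h\|_2$. For any block $k$-sparse $v$, whose coordinate support occupies at most $dk$ entries, I would use $|\langle\Phi v,\Phi h\rangle|=|\langle v,\Phi^{\top}\Phi h\rangle|\le\sqrt{dk}\,\|\Phi^{\top}\Phi h\|_\infty\,\|v\|_2\le\sqrt{dk}\,\eta\,\|v\|_2$. Applied to $v=h_{T_0}$ (with $k=s$) this replaces the factor $\sqrt{1+\delta_{ts}}$ of Theorem \ref{the.1} by $\sqrt{ds}$, and applied to the tail pieces, whose combined support may fill the remaining $N-ds$ coordinates, it yields the extra factor $(1+\sqrt{N-ds})$ in the $(1-p)\phi(t,p)$ term. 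Substituting these two bounds into the intermediate estimate gives $\|h\|_2\le D_1\,\eta=D_1(\varepsilon+\rho)+D_1\sigma^2(\Phi)\|x_{-\max(s)}\|_2$. Finally, since $\hat{x}^{DS}-x=h-x_{-\max(s)}$, the triangle inequality $\|\hat{x}^{DS}-x\|_2\le\|h\|_2+\|x_{-\max(s)}\|_2$ absorbs the stray $+1$ into $D_2=D_1\sigma^2(\Phi)+1$ and delivers (\ref{eq.13}). I expect the delicate point throughout to be keeping the $\ell_\infty$-to-$\ell_2$ block-dimension factors ($\sqrt{ds}$ on the head, $\sqrt{N-ds}$ on the tail) consistent with the convex-combination bookkeeping, rather than the RIP chain itself, which is identical to the already-established Theorem \ref{the.1}.
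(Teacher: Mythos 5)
Your proposal is correct and follows essentially the same route as the paper: verify feasibility of $x_{\max(s)}$ under the Dantzig constraint, reuse the Theorem \ref{the.1} RIP machinery, and replace the $\ell_2$-noise bounds on $\langle\Phi h_{\max(s)},\Phi h\rangle$ and $\|\Phi h\|^2_2$ by $\ell_1$--$\ell_\infty$ duality, yielding the factors $\sqrt{ds}$ and $(1+\sqrt{N-ds})\sqrt{ds}$ before solving the same quadratic inequality. The only sketch-level slips (writing the $u_i$ as block $s$-sparse rather than block $(t-1)s$-sparse, and omitting $\mu$ in $h_{T_0}+(t-1)\mu u_i$) are harmless since you defer those details to the already-established proof of Theorem \ref{the.1}, where they are handled correctly.
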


\begin{remark}
In the case of $d_i=1,~i=1,2,\cdots,M$, we obtain the same results as Theorem 3 in \cite{Chen and Wan 2019}.
\end{remark}

\begin{corollary}\label{cor.1}
Under the same conditions as in Theorem \ref{the.1}, suppose that $e=0$ and $x$ is block $s$-sparse. Then $x$ can be accurately reconstructed via
\begin{align}\label{eq.11}
\hat{x}=\arg\min_{\tilde{x}\in\mathbb{R}^N}\|\tilde{x}\|^p_{2,p}~\mbox{s.t.}~y=\Phi \tilde{x}.
\end{align}
\end{corollary}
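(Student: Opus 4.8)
The plan is to obtain Corollary \ref{cor.1} as a direct specialization of Theorem \ref{the.1}: the two hypotheses $e=0$ and $x$ block $s$-sparse force every term on the right-hand side of (\ref{eq.12}) to vanish, so the stability estimate collapses to exact recovery, while simultaneously the feasible set of (\ref{eq.4}) degenerates to the equality constraint defining (\ref{eq.11}).

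First I would fix the noise level $\rho=0$, which is admissible since $e=0$ trivially satisfies $\|e\|_2\leq 0$. Next, because $x$ is block $s$-sparse, its best block $s$-term approximation is $x$ itself, so $x_{-\max(s)}=0$ and hence $\|x_{-\max(s)}\|_2=0$. Consequently the admissibility requirement $\rho+\sigma(\Phi)\|x_{-\max(s)}\|_2\leq\varepsilon$ imposed in Theorem \ref{the.1} is satisfied by the choice $\varepsilon=0$, and the block RIP hypothesis $\delta_{ts}<\phi(t,p)$ is inherited unchanged.

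With $\varepsilon=0$ the noise structure becomes $\mathcal{B}^{l_2}(0)=\{e:\|e\|_2\leq 0\}=\{0\}$, so the feasibility condition $y-\Phi\tilde{x}\in\mathcal{B}$ in (\ref{eq.4}) reduces to $y=\Phi\tilde{x}$. Therefore the program (\ref{eq.4}) coincides verbatim with (\ref{eq.11}), so that $\hat{x}=\hat{x}^{l_2}$; note also that the true signal $x$ is feasible for (\ref{eq.11}) because $y=\Phi x$. Invoking the error bound (\ref{eq.12}) and substituting $\varepsilon=\rho=0$ together with $\|x_{-\max(s)}\|_2=0$ yields $\|\hat{x}-x\|_2\leq C_1\cdot 0+C_2\cdot 0=0$, whence $\hat{x}=x$.

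Since the entire analytic content is carried by Theorem \ref{the.1}, there is no genuine obstacle here; the only points demanding care are the two bookkeeping verifications just indicated, namely that contracting $\varepsilon$ to $0$ keeps $x$ feasible (guaranteed by $y=\Phi x$) and that the RIP constant condition transfers identically to the noiseless, exactly-sparse regime.
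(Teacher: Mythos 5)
Your proof is correct and is exactly the intended argument: the paper treats Corollary \ref{cor.1} as an immediate specialization of Theorem \ref{the.1} (taking $\rho=\varepsilon=0$ and $x_{-\max(s)}=0$, so that $\mathcal{B}^{l_2}(0)=\{0\}$ turns (\ref{eq.4}) into (\ref{eq.11}) and the bound (\ref{eq.12}) collapses to $\|\hat{x}-x\|_2=0$), and indeed gives no separate proof. Your two bookkeeping checks, feasibility of $x$ and the unchanged RIP hypothesis, are precisely what makes the specialization legitimate.
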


In the following, we will decide how many random Gaussian measurements are demanded for (\ref{eq.7}) to be fulfilled with overwhelming probability. In this sequel, let $(\Omega,\tau)$ be a probability measure space and $z$ be a random variable which follows one of the following probability distributions:
\begin{align}\label{eq.33}
z\sim \mathcal{N}(0,1/n),~z\sim\begin{cases}1/\sqrt{n},\quad \ \ &
  \mbox{w.p.}~1/2,\\
  -1/\sqrt{n},\quad \ \ &
  \mbox{w.p.}~1/2,
\end{cases}
\mbox{or}~z\sim\begin{cases}\sqrt{3/n},\quad \ \ &
  \mbox{w.p.}~1/6,\\
  0,\quad \ \ &
  \mbox{w.p.}~2/3,\\
  -\sqrt{3/n},\quad \ \ &
  \mbox{w.p.}~1/6.
\end{cases}
\end{align}
Given $n$ and $N$, the random matrices $\Phi$ can be produced by making choice of the elements $\Phi_{i,j}$ as independent copies of $z$. This generates the random matrices $\Phi$.

\begin{theorem}\label{the.4}
Let $\Phi$ be an $n\times N$ matrix with $n<N$ whose elements are i.i.d. random variables defined by (\ref{eq.33}). If $$n\geq\frac{ts\log\frac{N}{ds}}{\frac{\phi^2(t,p)}{16}-\frac{\phi^3(t,p)}{48}},$$ the next assertion holds with probability more than \\$1-2\exp\left\{ts\left(d\log\frac{12}{\phi(t,p)}+\log\frac{e}{t}+\log\frac{M}{s}\right)
-n\left(\frac{\phi^2(t,p)}{16}-\frac{\phi^3(t,p)}{48}\right)\right\}$: for any block $s$-sparse signal $x\in\mathbb{R}^N$ over $\mathcal{I}=\{d_1=d,d_2=d,\cdots,d_M=d\}$ with $Md=N$, $x$ is the unique solution to (\ref{eq.11}) when the matrix $\Phi$ satisfies $\delta_{ts}<\phi(t,p)$.

\end{theorem}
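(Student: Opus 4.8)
The plan is to split the statement into a probabilistic part and a deterministic part and let $\phi=\phi(t,p)$ for brevity. The deterministic part is already available: once $\Phi$ is known to obey the block RIP with $\delta_{ts}<\phi$, Corollary \ref{cor.1} (the noiseless, exactly block-sparse specialization of Theorem \ref{the.1}) guarantees that every block $s$-sparse $x$ is the unique solution of (\ref{eq.11}). Hence the whole task reduces to showing that, under the stated lower bound on $n$, a matrix with i.i.d.\ entries drawn from (\ref{eq.33}) satisfies $\delta_{ts}<\phi$ with the advertised probability. The engine is the concentration inequality (\ref{eq.34}) of Lemma \ref{lem.4}, which for each of the three distributions in (\ref{eq.33}) holds with $c_0(\epsilon)=\epsilon^2/4-\epsilon^3/6$; in particular $c_0(\phi/2)=\phi^2/16-\phi^3/48$, which already isolates the $n$-dependent term of the target probability.

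First I would fix a block-support $T$, i.e.\ a choice of $ts$ blocks out of the $M$ blocks, and consider the linear subspace $X_T\subset\mathbb{R}^N$ of vectors supported on $T$. Because every block has common length $d$, this subspace has dimension $d\cdot ts$ rather than $ts$; this is the single place where the block structure enters the dimension count. Running the covering argument behind Lemma \ref{lem.4} on $X_T$ — construct a $(\phi/4)$-net of the unit sphere of $X_T$ of cardinality at most $(12/\phi)^{d\cdot ts}$, apply (\ref{eq.34}) with $\epsilon=\phi/2$ to each net point, union-bound over the net, and pass from the net to all of $X_T$ in the usual way — yields that $\Phi$ fails (\ref{eq.35}) with $\delta=\phi$ on $X_T$ with probability at most $2(12/\phi)^{d\cdot ts}e^{-c_0(\phi/2)n}$.

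Next I would remove the conditioning on $T$ by a union bound over all block-supports. The number of ways to select $ts$ of the $M$ blocks is $\binom{M}{ts}\le(eM/(ts))^{ts}=\exp\{ts(\log(e/t)+\log(M/s))\}$, using $N=Md$. Multiplying the per-support failure probability by this count gives
\begin{align}\notag
\mathbb{P}\big(\delta_{ts}\ge\phi\big)\le 2\exp\left\{ts\Big(d\log\tfrac{12}{\phi}+\log\tfrac{e}{t}+\log\tfrac{M}{s}\Big)-n\Big(\tfrac{\phi^2}{16}-\tfrac{\phi^3}{48}\Big)\right\},
\end{align}
which is exactly one minus the probability claimed in the theorem (note $\log(N/ds)=\log(M/s)$). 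The hypothesis $n\ge ts\log(N/ds)/(\phi^2/16-\phi^3/48)$ forces the dominant $ts\log(M/s)$ contribution of the exponent to be overwhelmed by $n\,c_0(\phi/2)$, so that for the relevant regime of large $N$ the failure probability decays and $\delta_{ts}<\phi$ holds with overwhelming probability; combining this event with Corollary \ref{cor.1} closes the argument.

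The routine steps are the net construction and the net-to-subspace passage, which are verbatim the proof of Lemma \ref{lem.4}. The only genuine obstacle is the bookkeeping of the block structure: one must be careful that the covering exponent is $d\cdot ts$ (the dimension of a $ts$-block subspace) and that the combinatorial factor is $\binom{M}{ts}$ over blocks, not $\binom{N}{ts}$ over coordinates — getting either wrong changes the exponent and breaks the match with the stated probability. A secondary technical point is that $ts$ need not be an integer when $1<t\le2$; I would replace it by $\lceil ts\rceil$ consistently with the definition of $\delta_{ts}$ and check that the same bound survives.
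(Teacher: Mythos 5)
Your proposal is correct and follows essentially the same route as the paper: apply Lemma \ref{lem.4} on each fixed block-support subspace (of dimension $tsd$, giving the $(12/\phi(t,p))^{tsd}$ factor), union-bound over the $\binom{M}{ts}$ block supports, bound that count by $(eM/(ts))^{ts}$, and invoke Corollary \ref{cor.1} for the deterministic recovery step. The only differences are cosmetic — you unpack the covering-net argument that the paper cites as a black box, and you flag the non-integer-$ts$ issue, which the paper handles only in the proof of Theorem \ref{the.1}.
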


\section{Numerical experiments}
\label{sec.5}

In this section, we carry out a few numerical simulations to hold out the application of our theoretical results. We could transform the constrained optimization problem (\ref{eq.4}) into an alternative unconstrained form below:
\begin{align}\label{eq.39}
\min_{\tilde{x}\in\mathbb{R}^N}\lambda\|\tilde{x}\|^p_{2,p}+\frac{1}{2}\|y-\Phi \tilde{x}\|^2_2.
\end{align}
Solving the problem (\ref{eq.39}), we adopt the standard Alternating Direction Method of Multipliers (ADMM)\cite{Wen F et al 2017a}\cite{Wen F et al 2017b}\cite{Lu CY et al 2018}. Utilizing a auxiliary variable $v\in\mathbb{R}^N$, we can rewrite the formulation (\ref{eq.39}) as
\begin{align}\label{eq.40}
\min_{\tilde{x},~v\in\mathbb{R}^N}\lambda\|v\|^p_{2,p}+\frac{1}{2}\|y-\Phi \tilde{x}\|^2_2~\mbox{s.t.}~\tilde{x}-v=0.
\end{align}
The augmented Lagrangian function of the above problem is
\begin{align}\label{eq.41}
\mathcal{L}_{\gamma}(\tilde{x},v,z)=\lambda\|v\|^p_{2,p}+\frac{1}{2}\|y-\Phi \tilde{x}\|^2_2+\left<z,\tilde{x}-v\right>+\frac{\gamma}{2}\|\tilde{x}-v\|^2_2,
\end{align}
where $z\in\mathbb{R}^n$ is a Lagrangian multiplier, and $\gamma>0$ is a penalty parameter. Then, ADMM composes of the below three steps:
\begin{align}
\label{eq.42}&\tilde{x}^{k+1}=\arg\min\frac{1}{2}\|y-\Phi \tilde{x}\|^2_2+\frac{\gamma}{2}\|z^k+\tilde{x}-v^k\|^2_2\\
\label{eq.43}&v^{k+1}=\arg\min\lambda\|v\|^p_{2,p}+\frac{\gamma}{2}\|z^k+\tilde{x}^{k+1}-v\|^2_2\\
\label{eq.44}&z^{k+1}=z^k+\tilde{x}^{k+1}-v^{k+1}.
\end{align}
The solution of problem (\ref{eq.42}) is explicitly provided by
\begin{align}\label{eq.45}
\tilde{x}^{k+1}=(\Phi^{\top}\Phi+\gamma I_n)^{-1}(\Phi^{\top}y-\gamma(z^k-v^k)).
\end{align}
To use the existing conclusions on the proximity operator of $l_p$-norm ($0<p<1$), by employing the inequality: $\|x\|_2<\|x\|_p$ for $x\in\mathbb{R}^n$ and $0<p<1$, the optimization problem (\ref{eq.43}) can be converted into
\begin{align}\label{eq.46}
v^{k+1}=\arg\min\sum^M_{i=1}\lambda\|v[i]\|^p_p+\frac{\gamma}{2}\|z^k[i]+\tilde{x}^{k+1}[i]-v[i]\|^2_2.
\end{align}

In our experiments, without loss of generality, we think over the block-sparse signal $x$ with even block size, i.e., $d_i=d$, $i=1,\cdots,M$, and take the signal length $N=1024$. For each experiment, first of all, we randomly produce block-sparse signal $x$ with the amplitude of each nonzero entry generated according to the Gaussian distribution. We use an $n\times N$ orthogonal Gaussian random matrix as the measurement matrix $\Phi$. We set the number of random measurement $m=128$ unless otherwise specified. With $x$ and $\Phi$, we generate the linear measurement $y$ by $y=\Phi x+e$, where $e$ is the Gaussian noise vector. Each given experimental result is an
average over 100 independent trails.

In Fig. \ref{fig.1}a, we produce the signals by making choice of $64$ blocks uniformly at random with $n=64,~N=128$, i.e., the block size $d=2$. The relative error of recovery $\|x-x^*\|_2/\|x\|_2$ is plotted versus the regularization parameter $\lambda$ for the different values of $p$, i.e., $p=0.2,~0.4,~0.6,~0.8,~1$. The $\lambda$ ranges from $10^{-8}$ to $10^{-2}$. From the figure, the parameter $\lambda=10^{-4}$ is a proper choice. Fig. \ref{fig.1}b presents experimental results regarding the performance of the non-block algorithm and the block algorithm with $p=0.4$. Two curves of relative error are provided via mixed $l_2/l_p$ minimization and orthogonal greedy algorithm (OGA) \cite{Petukhov 2006}. Fig. \ref{fig.1}b reveals the signal construction is quite significant in signal recovery.

\begin{figure}[h]
\begin{center}
\subfigure[]{\includegraphics[width=0.40\textwidth]{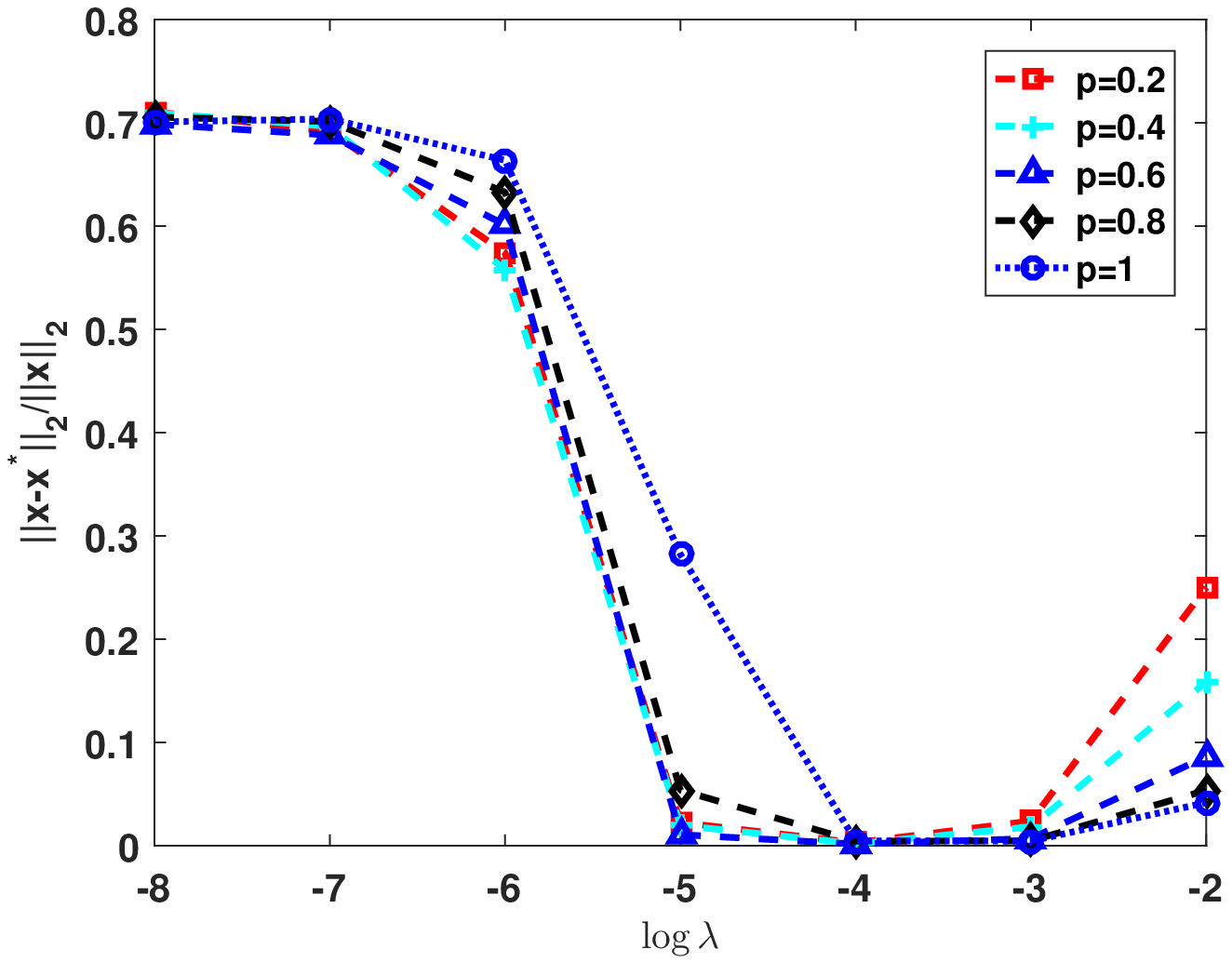}}
\subfigure[]{\includegraphics[width=0.40\textwidth]{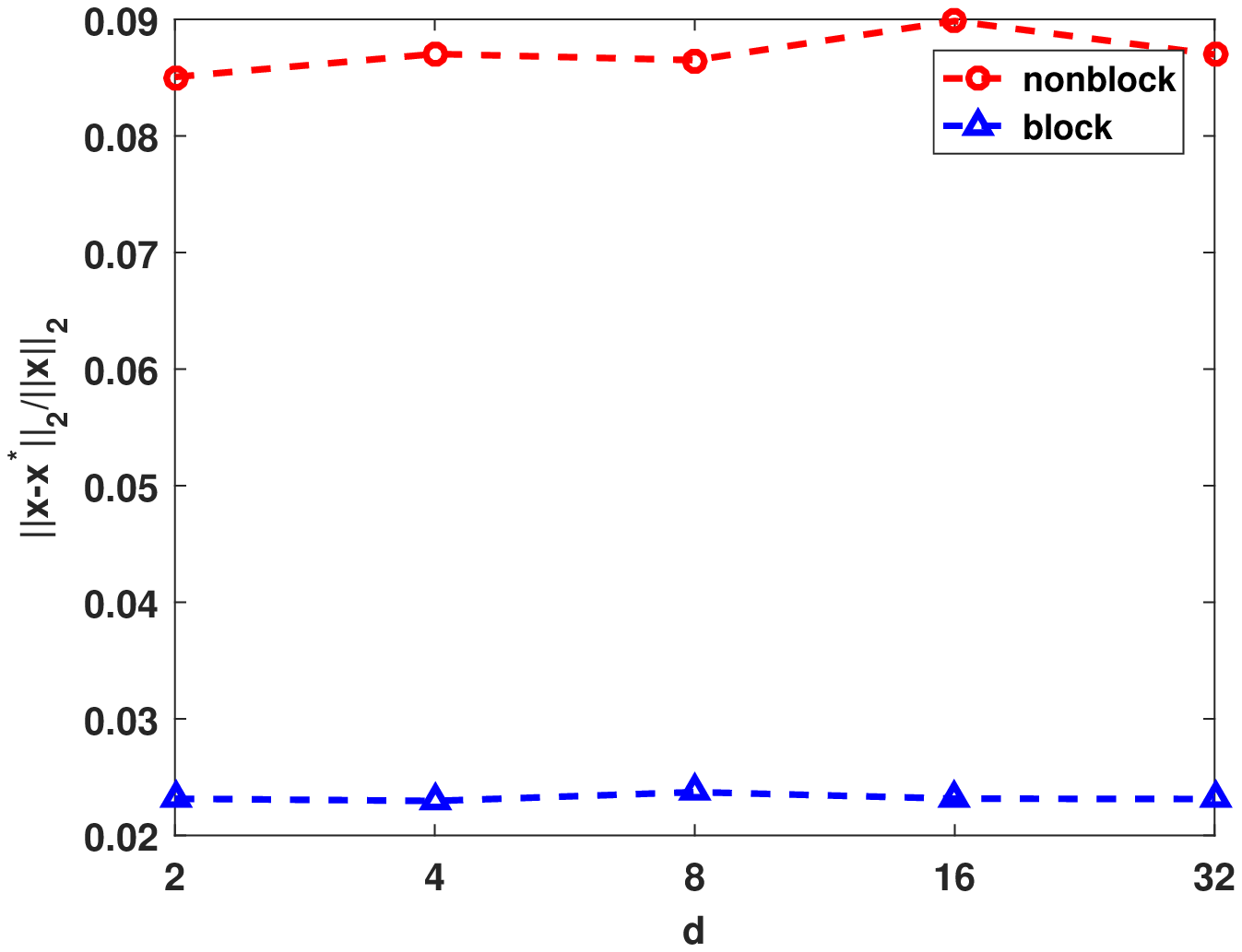}}
\caption{(a) Recovery performance of mixed $l_2/l_p$ minimization versus $\lambda$ for block size $d=2$, (b) Recovery performance of mixed $l_2/l_p$ minimization with $p=0.4$ and OGA and the number of nonzero entries $k=64$}\label{fig.1}
\end{center}
\vspace*{-14pt}
\end{figure}

Signal-to-noise ratio (SNR, SNR$=20\log_{10}(\|x\|_2/\|x-x^*\|_2)$) versus the values of $p$ and the nonzero entries $k$, the results are respectively given in Fig. \ref{fig.2}a and b. In Fig. \ref{fig.2}a, the values of $p$ vary from $0.01$ to $1$, and in Fig. \ref{fig.2}b, the number of nonzero entries $k$ ranges from $8$ to $48$. Figs. \ref{fig.2}a and b evidences that mixed $l_2/l_p$ minimization performs better than that of standard $l_p$ minimization. Figs. \ref{fig.2}a and b provide the relationship between the relative error and the number of measurements $n$ in different block sizes $d=1,~2,~4,~8$ and the values of $p=0.2,~0.4,~0.6,~0.8,~1$.

\begin{figure}[h]
\begin{center}
\subfigure[]{\includegraphics[width=0.40\textwidth]{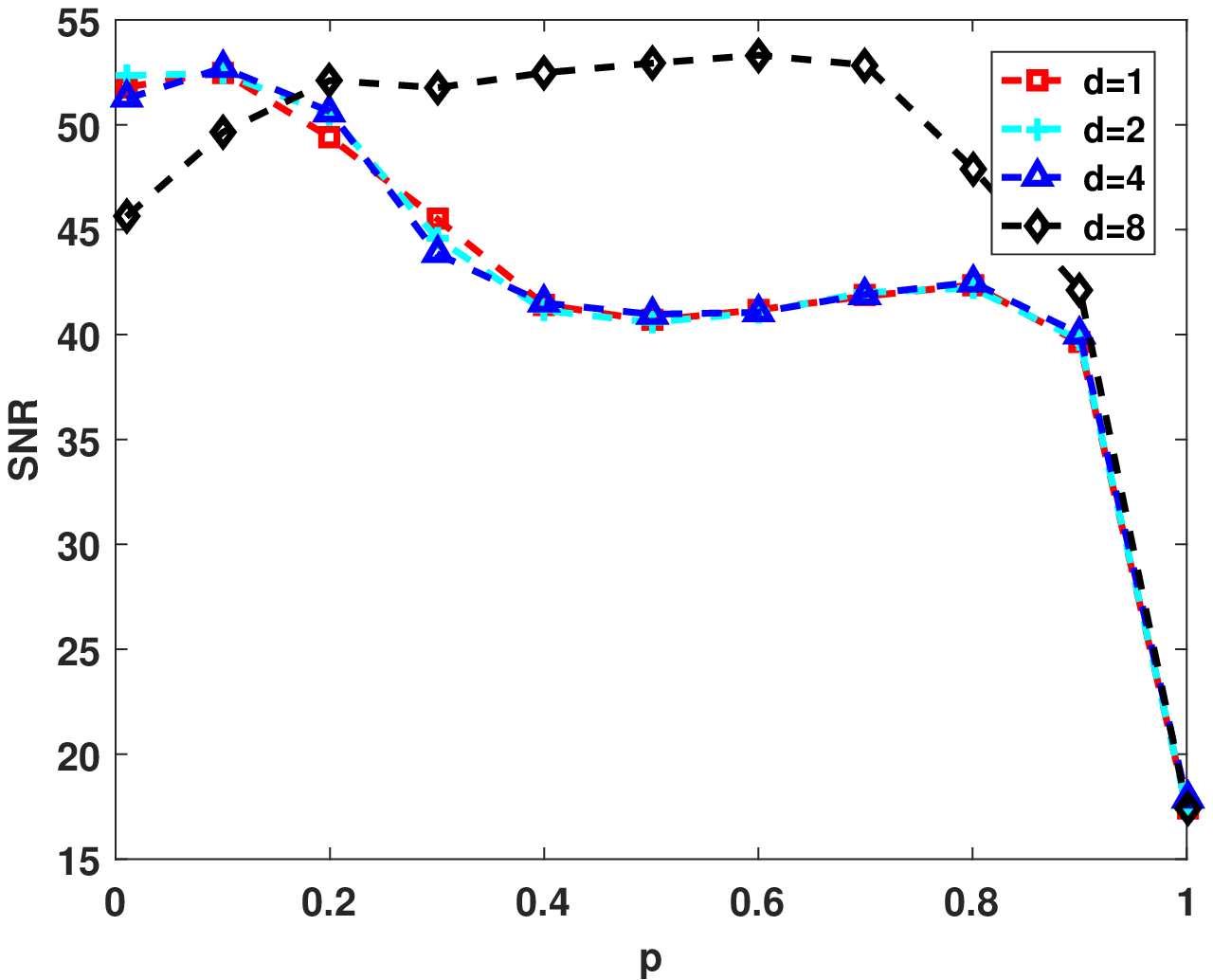}}
\hspace{0.5cm}
\subfigure[]{\includegraphics[width=0.40\textwidth]{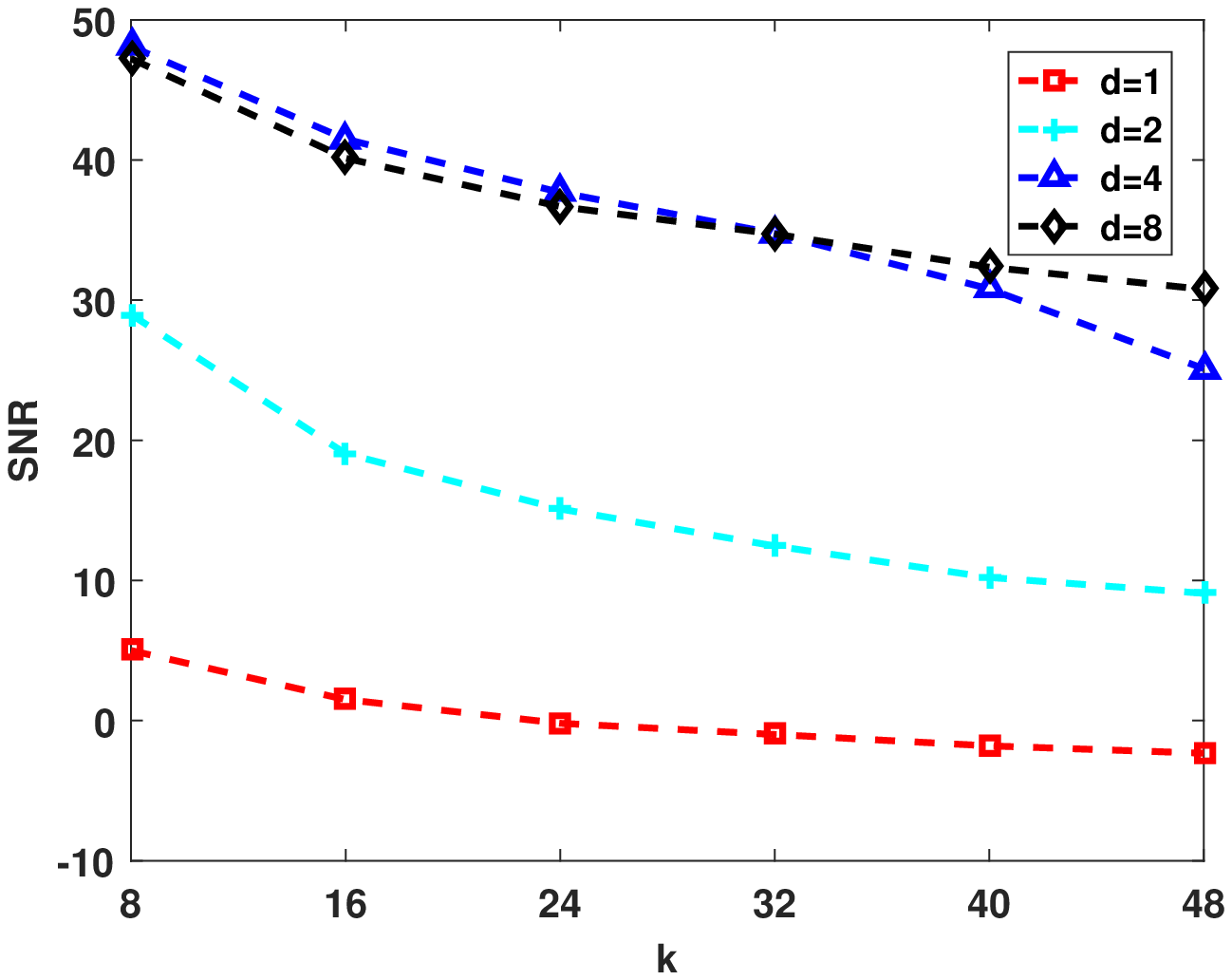}}
\caption{SNR versus the values of $p$ and the number of nonzero entries $k$ in (a) and (b) respectively (a) For $k=8$, (b) For $p=0.4$.}\label{fig.2}
\end{center}
\vspace*{-14pt}
\end{figure}

\begin{figure}[h]
\begin{center}
\subfigure[]{\includegraphics[width=0.40\textwidth]{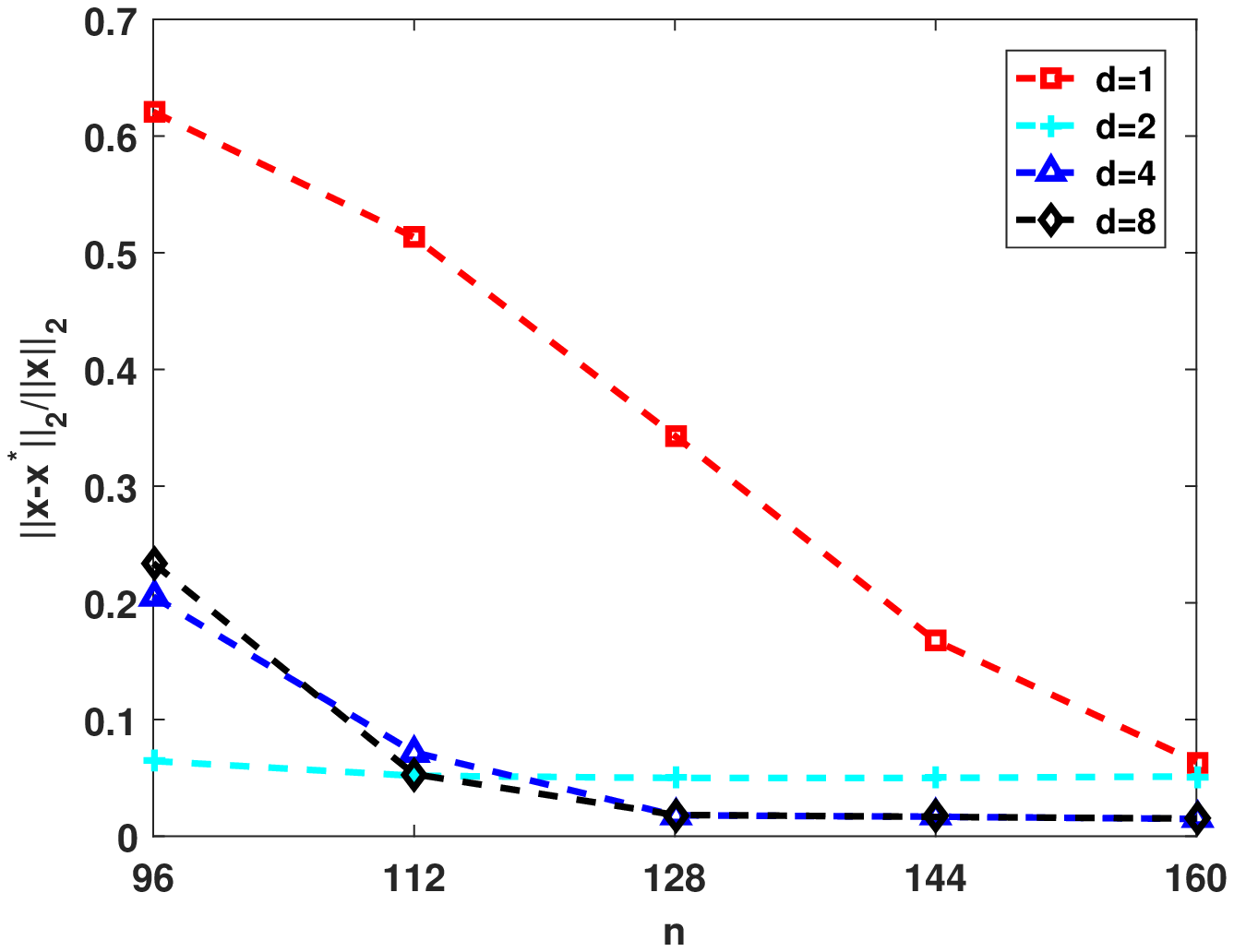}}
\hspace{0.5cm}
\subfigure[]{\includegraphics[width=0.40\textwidth]{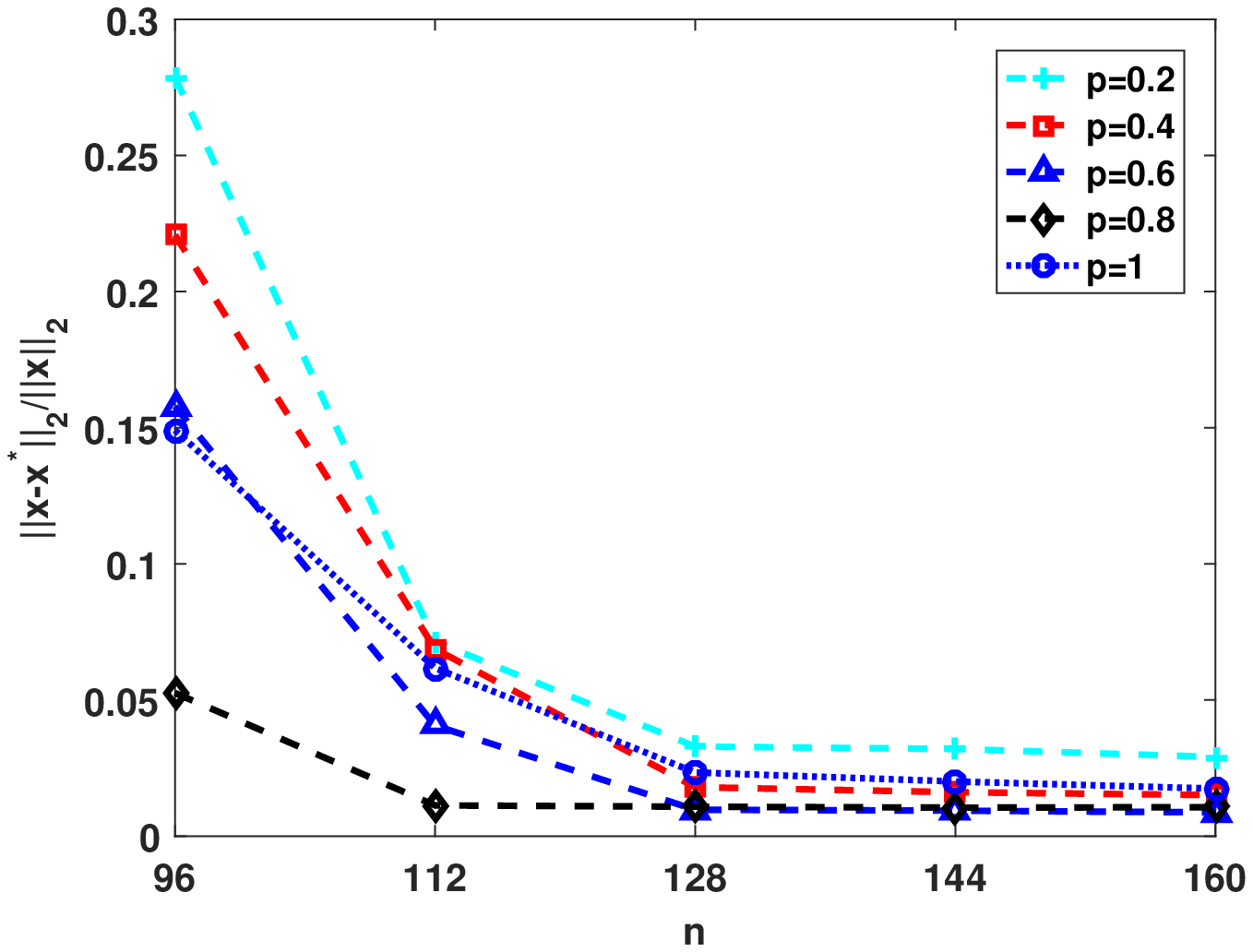}}
\caption{Relative error versus the number of measurements $n$ for (a) $p=0.4$, $k=16$ and (b) $d=2$, $k=16$}\label{fig.3}
\end{center}
\vspace*{-14pt}
\end{figure}

Eventually, we compare the performance of Group-Lp for $p=0.4$ with other typical algorithms consisting of Block-OMP algorithm\cite{Eldar and Kuppinger 2010}, Block-SL0 algorithm \cite{Ghalehjegh et al 2010} for $l_2/l_0$ solver and Block-ADM algorithm \cite{Deng W et al 2013}. We exploit SNR to weigh the algorithm efficiency. In Fig. \ref{fig.4}a, we select signals whose block size $d=2,~4,~8,~16,~32$ with the number of nonzero entries $k=32$ as the test signals, and in Fig. \ref{fig.4}b, the block size $d=2$. One can see that, overall, the performance of Group-Lp ($p=0.4$) is much better than that of other three algorithms.

\begin{figure}[h]
\begin{center}
\subfigure[]{\includegraphics[width=0.40\textwidth]{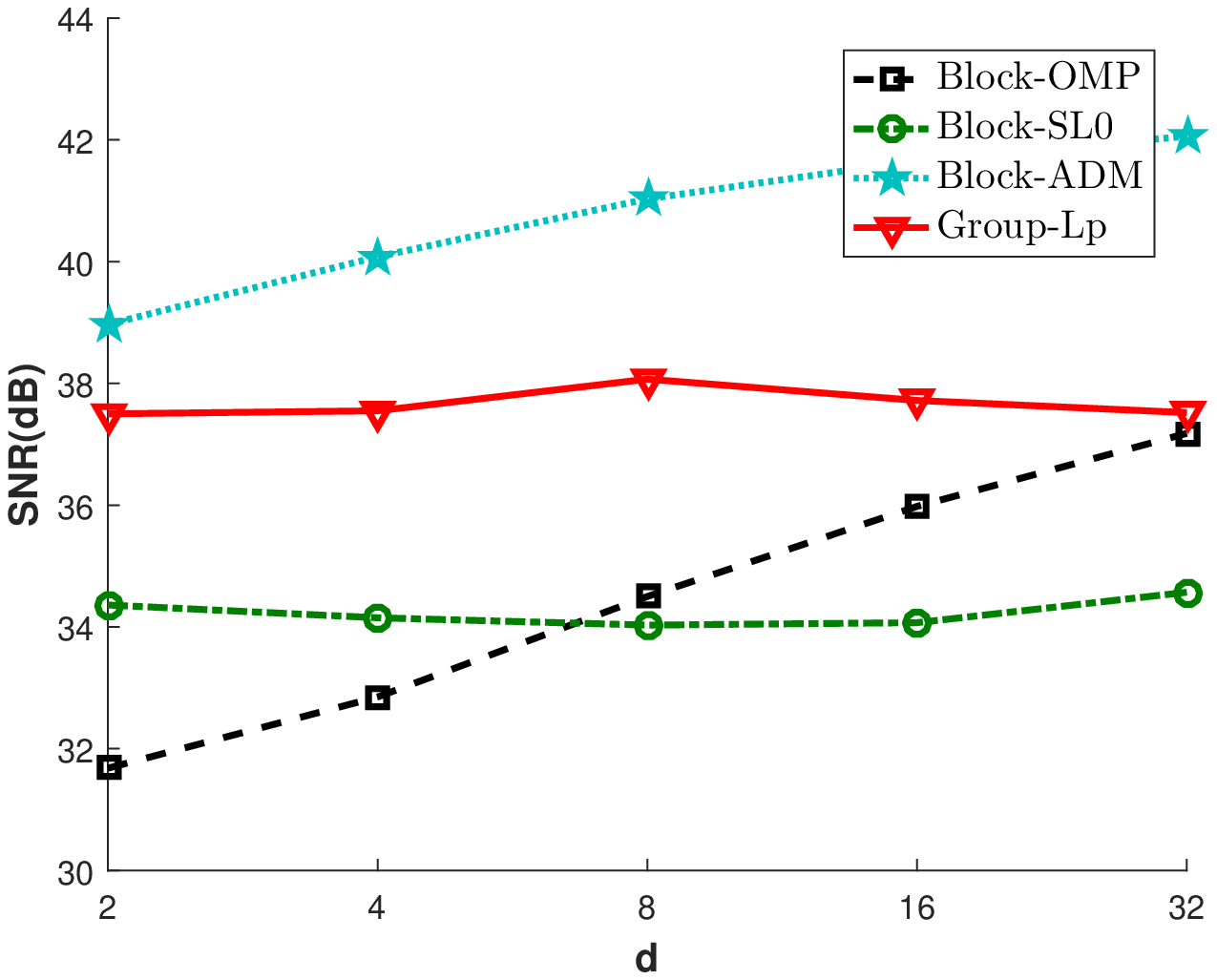}}
\hspace{0.5cm}
\subfigure[]{\includegraphics[width=0.40\textwidth]{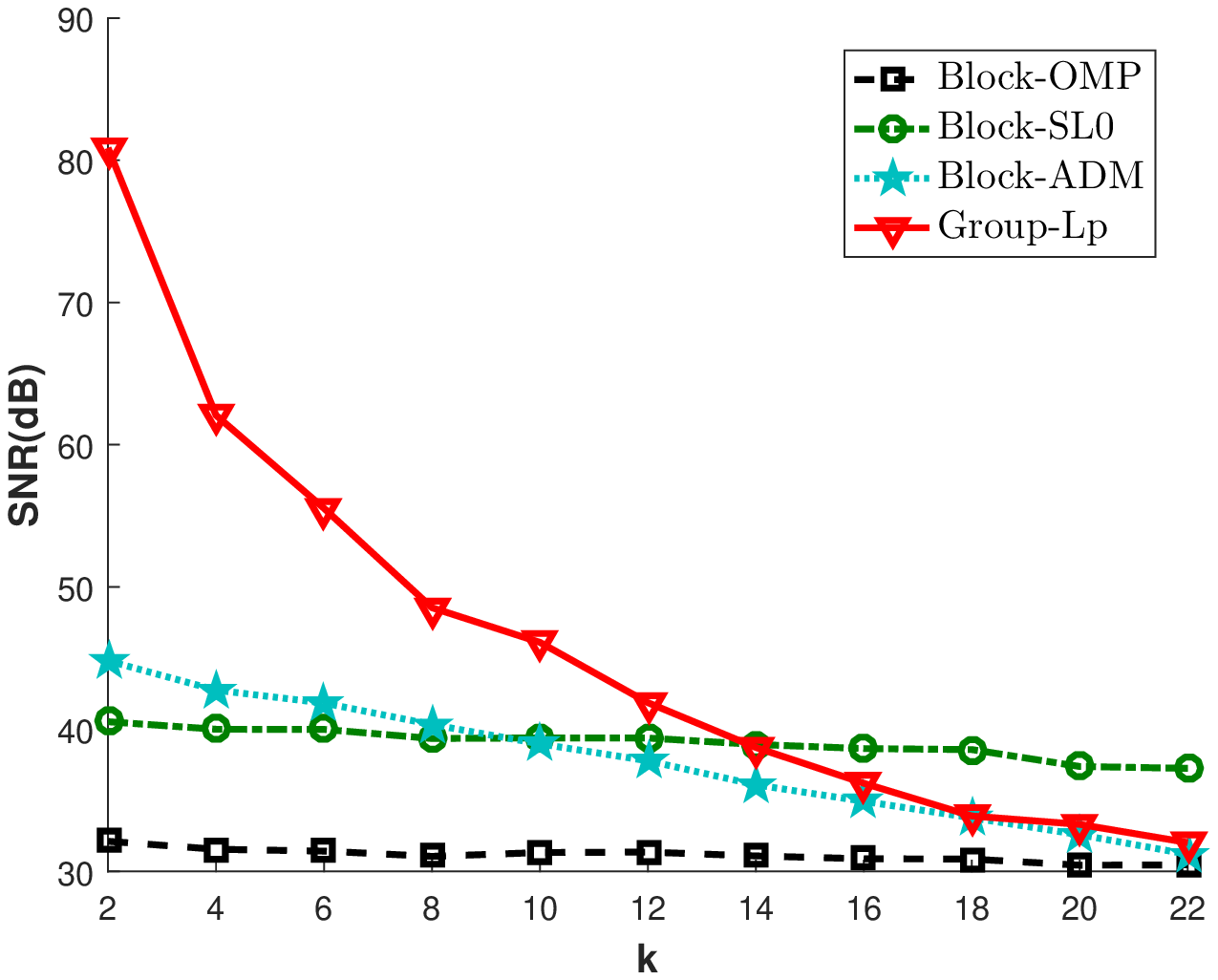}}
\caption{Comparison recovery performance with respect to SNR (a) Number of nonzero coefficients $k=32$ and (b) block size $d=2$}\label{fig.4}
\end{center}
\vspace*{-14pt}
\end{figure}

\section{The proofs of Main Results}
\label{sec.4}

\noindent \textbf{Proof of Theorem \ref{the.1}.}
First of all, suppose that $ts$ is an integer. Recollect that $h=\hat{x}^{l_2}-x_{\max(s)}$, where $\hat{x}^{l_2}$ is the solution to (\ref{eq.4}) with $\mathcal{B}=\mathcal{B}^{l_2}(\varepsilon)$. Then, by employing the condition of Theorem \ref{the.1}, we have
\begin{align}\notag
\|y-\Phi x_{\max(s)}\|_2\leq\|y-\Phi x\|_2+\|\Phi(x-x_{\max(s)})\|_2\leq\rho+\sigma(\Phi)\|x_{-\max(s)}\|_2\leq\varepsilon,
\end{align}
that is, $y-\Phi x_{\max(s)}\in\mathcal{B}^{l_2}(\varepsilon).$

Denote $\alpha^p=\|h_{\max(s)}\|^p_{2,p}/s$. Then, by utilizing Lemma \ref{lem.3}, we have
\begin{align}\label{eq.14}
\notag&\|h_{-\max(s)}\|_{2,\infty}\leq\alpha\leq\frac{\alpha}{(t-1)^{1/p}},~\mbox{and}\\
&\|h_{-\max(s)}\|^p_{2,p}\leq s(t-1)\left(\frac{\alpha}{(t-1)^{1/p}}\right)^p.
\end{align}
Through applying Lemma \ref{lem.1} and combining with (\ref{eq.14}), we have $h_{-\max(s)}=\sum^N_{i=1}\lambda_iu_i$, where $u_i$ is block $(t-1)s$-sparse, $\sum^N_{i=1}\lambda_i=1$ with $\lambda_i\in[0,1]$, and
\begin{align}\label{eq.15}
\sum_i\lambda_i\|u_i\|^2_{2,2}\leq\frac{\alpha^p}{t-1}\|h_{-\max(s)}\|^{2-p}_{2,2-p}.
\end{align}
Hence,
\begin{align}\label{eq.16}
\notag\sum_i\lambda_i\|u_i\|^2_{2,2}&\overset{\text{(a)}}{\leq}\frac{\alpha^p}{t-1}\left(\|h_{-\max(s)}\|^2_{2,2}\right)^{\frac{2(1-p)}{2-p}}
\left(\|h_{-\max(s)}\|^p_{2,p}\right)^{\frac{p}{2-p}}\\
\notag&\leq\frac{\alpha^p}{t-1}\left(\|h_{-\max(s)}\|^2_{2,2}\right)^{\frac{2(1-p)}{2-p}}(s\alpha^p)^{\frac{p}{2-p}}\\
\notag&\overset{\text{(b)}}{\leq}\frac{1}{t-1}\left(\|h_{-\max(s)}\|^2_{2,2}\right)^{\frac{2(1-p)}{2-p}}
\left(\|h_{\max(s)}\|^2_{2,2}\right)^{\frac{p}{2-p}}\\
&\overset{\text{(c)}}{\leq}\frac{1}{t-1}\left(\|h_{-\max(s)}\|^2_{2}\right)^{\frac{2(1-p)}{2-p}}
\left(\|h_{\max(s)}\|^2_{2}\right)^{\frac{p}{2-p}},
\end{align}
where (a) follows from H$\ddot{o}$lder's inequality, (b) is due to the fact that $\|x\|_p\leq\|x\|_q\leq n^{1/q-1/p}\|x\|_p$, $x\in\mathbb{R}^n$ and given $0<q<p\leq\infty$, and (c) is from the fact that $\|x\|^2_{2,2}=\sum^M_{i=1}\|x[i]\|^2_2=\sum^N_{i=1}x_i^2=\|x\|^2_2$, $x\in\mathbb{R}^N$.

From Cauchy-Schwartz inequality and the definition of block RIP, we get
\begin{align}\label{eq.17}
\left<\Phi h_{\max(s)},\Phi h\right>\leq\|\Phi h_{\max(s)}\|_2\|\Phi h\|_2\leq\sqrt{1+\delta_{ts}}\|h_{\max(s)}\|_2\|\Phi h\|_2.
\end{align}
Since
\begin{align}\label{eq.18}
\notag\|\Phi h\|_2&\leq\|\Phi(\hat{x}^{l_2}-x_{\max(s)})\|_2\leq\|\Phi\hat{x}^{l_2}-y\|_2
+\|\Phi x_{\max(s)}-y\|_2\\
&\leq\varepsilon+\rho+\sigma(\Phi)\|x_{-\max(s)}\|_2,
\end{align}
therefore
\begin{align}\label{eq.19}
\left<\Phi h_{\max(s)},\Phi h\right>\leq\sqrt{1+\delta_{ts}}(\varepsilon+\rho+\sigma(\Phi)\|x_{-\max(s)}\|_2)\|h_{\max(s)}\|_2.
\end{align}
Take $\beta_i=h_{\max(s)}+(t-1)\mu u_i$. Then,
\begin{align}\label{eq.20}
\sum^N_{j=1}\lambda_j\beta_j-\frac{p}{2}\beta_i-(t-1)\mu h=\left[1-(t-1)\mu-\frac{p}{2}\right]h_{\max(s)}-\frac{p}{2}(t-1)\mu u_i.
\end{align}
Furthermore, both $\sum^N_{j=1}\lambda_j\beta_j-\frac{p}{2}\beta_i-(t-1)\mu h$ and $\beta_i-\beta_j=(t-1)\mu(u_i-u_j)$ are block $ts$-sparse, because $h_{\max(s)}$ is block $s$-sparse, and $u_i$ is block $(t-1)s$-sparse.

Observe that the identity below \cite{Zhang and Li 2019}
\begin{align}\label{eq.21}
\notag\sum_i&\lambda_i\left\|\Phi\left(\sum_j\lambda_j\beta_j-\frac{p}{2}\beta_i\right)\right\|^2_2
+\frac{1-p}{2}\sum_{i,j}\lambda_i\lambda_j\|\Phi(\beta_i-\beta_j)\|^2_2\\
&=\left(1-\frac{p}{2}\right)^2\sum_i\lambda_i\|\Phi\beta_i\|^2_2,
\end{align}
where $\sum_i\lambda_i=1$.

First of all, we determine the left hand side (LHS) of (\ref{eq.21}). Putting (\ref{eq.20}) into LHS of (\ref{eq.21}) and combining with (\ref{eq.19}) and the concept of block RIP, we get
\begin{align}\label{eq.22}
\notag LHS=&\sum_i\lambda_i\|\Phi[(1-(t-1)\mu-\frac{p}{2})h_{\max(s)}-\frac{p}{2}(t-1)\mu u_i+(t-1)\mu h]\|^2_2\\
\notag&+\frac{1-p}{2}\sum_{i,j}\lambda_i\lambda_j(t-1)^2\mu^2\|\Phi(u_i-u_j)\|^2_2\\
\notag=&\sum_i\lambda_i\|\Phi[(1-(t-1)\mu-\frac{p}{2})h_{\max(s)}-\frac{p}{2}(t-1)\mu u_i]\|^2_2\\
\notag&+2(1-\frac{p}{2})(1-(t-1)\mu)(t-1)\mu<\Phi h_{\max(s)},\Phi h>+(1-p)(t-1)^2\mu^2\|\Phi h\|^2_2\\
\notag&+\frac{1-p}{2}(t-1)^2\mu^2\sum_{i,j}\lambda_i\lambda_j(t-1)^2\mu^2\|\Phi(u_i-u_j)\|^2_2\\
\notag\leq&(1+\delta_{ts})[\sum_i\lambda_i\|(1-(t-1)\mu-\frac{p}{2})h_{\max(s)}-\frac{p}{2}(t-1)\mu u_i\|^2_2\\
\notag&+\frac{1-p}{2}(t-1)^2\mu^2\sum_{i,j}\lambda_i\lambda_j(t-1)^2\mu^2\|u_i-u_j\|^2_2]\\
\notag&+2(1-\frac{p}{2})(1-(t-1)\mu)(t-1)\mu\sqrt{1+\delta_{ts}}
(\varepsilon+\rho+\sigma(\Phi)\|x_{-\max(s)}\|_2)\|h_{\max(s)}\|_2\\
\notag&+(1-p)(t-1)^2\mu^2(\varepsilon+\rho+\sigma(\Phi)\|x_{-\max(s)}\|_2)^2\\
\notag=&(1+\delta_{ts})[1-(t-1)\mu-\frac{p}{2}]^2\|h_{\max(s)}\|^2_2
+(1-\frac{p}{2})^2(t-1)^2\mu^2(1+\delta_{ts})\sum_i\lambda_i\|u_i\|^2_2\\
\notag&-(1-p)(t-1)^2\mu^2(1+\delta_{ts})\|h_{-\max(s)}\|^2_2\\
\notag&+2(1-\frac{p}{2})(1-(t-1)\mu)(t-1)\mu\sqrt{1+\delta_{ts}}
(\varepsilon+\rho+\sigma(\Phi)\|x_{-\max(s)}\|_2)\|h_{\max(s)}\|_2\\
&+(1-p)(t-1)^2\mu^2(\varepsilon+\rho+\sigma(\Phi)\|x_{-\max(s)}\|_2)^2.
\end{align}
Then again, by exploiting the representation of $\beta_i$ and the notion of block RIP, we get
\begin{align}\label{eq.23}
\notag RHS&=(1-\frac{p}{2})^2\sum_i\lambda_i\|\Phi\beta_i\|^2_2\\
\notag &=(1-\frac{p}{2})^2\sum_i\lambda_i\|\Phi(h_{\max(s)}+(t-1)\mu u_i)\|^2_2\\
\notag&\geq(1-\frac{p}{2})^2(1-\delta_{ts})\sum_i\lambda_i\|h_{\max(s)}+(t-1)\mu u_i\|^2_2\\
&=(1-\frac{p}{2})^2(1-\delta_{ts})\|h_{\max(s)}\|^2_2
+(1-\frac{p}{2})^2(t-1)^2\mu^2(1-\delta_{ts})\sum_i\lambda_i\|u_i\|^2_2.
\end{align}
A combination of (\ref{eq.22}) and (\ref{eq.23}), we obtain
\begin{align}\label{eq.24}
\notag&\{(1+\delta_{ts})[1-(t-1)\mu-\frac{p}{2}]^2-(1-\frac{p}{2})^2(1-\delta_{ts})\}\|h_{\max(s)}\|^2_2\\
\notag&+2(1-\frac{p}{2})^2(t-1)^2\mu^2\delta_{ts}\sum_i\lambda_i\|u_i\|^2_2
-(1-p)(t-1)^2\mu^2(1+\delta_{ts})\|h_{-\max(s)}\|^2_2\\
\notag&+2(1-\frac{p}{2})(1-(t-1)\mu)(t-1)\mu\sqrt{1+\delta_{ts}}
(\varepsilon+\rho+\sigma(\Phi)\|x_{-\max(s)}\|_2)\|h_{\max(s)}\|_2\\
&+(1-p)(t-1)^2\mu^2(\varepsilon+\rho+\sigma(\Phi)\|x_{-\max(s)}\|_2)^2\geq 0.
\end{align}
Substituting (\ref{eq.16}) into (\ref{eq.24}), we get
\begin{align}\label{eq.25}
\notag&\{(1+\delta_{ts})[1-(t-1)\mu-\frac{p}{2}]^2-(1-\frac{p}{2})^2(1-\delta_{ts})\}\|h_{\max(s)}\|^2_2\\
\notag&+2(1-\frac{p}{2})^2(t-1)\mu^2\delta_{ts}\left(\|h_{-\max(s)}\|^2_{2}\right)^{\frac{2(1-p)}{2-p}}
\left(\|h_{\max(s)}\|^2_{2}\right)^{\frac{p}{2-p}}\\
\notag&-(1-p)(t-1)^2\mu^2(1+\delta_{ts})\|h_{-\max(s)}\|^2_2\\
\notag&+2(1-\frac{p}{2})(1-(t-1)\mu)(t-1)\mu\sqrt{1+\delta_{ts}}
(\varepsilon+\rho+\sigma(\Phi)\|x_{-\max(s)}\|_2)\|h_{\max(s)}\|_2\\
&+(1-p)(t-1)^2\mu^2(\varepsilon+\rho+\sigma(\Phi)\|x_{-\max(s)}\|_2)^2\geq 0.
\end{align}
Regarding the LHS as the function of $\|h_{-\max(s)}\|^2_2$, we consider its extremum problem. Then,
\begin{align}\label{eq.26}
\notag&\{(1+\delta_{ts})[1-(t-1)\mu-\frac{p}{2}]^2-(1-\frac{p}{2})^2(1-\delta_{ts})\}\|h_{\max(s)}\|^2_2\\
\notag&+\frac{p}{2}(t-1)^2\mu^2(1+\delta_{ts})
(\frac{(2-p)\delta_{ts}}{(t-1)(1+\delta_{ts})})^{\frac{2-p}{p}}\|h_{\max(s)}\|^2_2\\
\notag&+2(1-\frac{p}{2})(1-(t-1)\mu)(t-1)\mu\sqrt{1+\delta_{ts}}
(\varepsilon+\rho+\sigma(\Phi)\|x_{-\max(s)}\|_2)\|h_{\max(s)}\|_2\\
&+(1-p)(t-1)^2\mu^2(\varepsilon+\rho+\sigma(\Phi)\|x_{-\max(s)}\|_2)^2\geq 0.
\end{align}

Noting that (\ref{eq.7}) and (\ref{eq.8}), we get
\begin{align}\label{eq.27}
\notag&[2-p-(t-1)\mu]^2(\delta_{ts}-\frac{\mu}{\frac{2-p}{t-1}-\mu})\|h_{\max(s)}\|^2_2\\
\notag&+2(1-\frac{p}{2})(1-(t-1)\mu)(t-1)\mu\sqrt{1+\delta_{ts}}
(\varepsilon+\rho+\sigma(\Phi)\|x_{-\max(s)}\|_2)\|h_{\max(s)}\|_2\\
&+(1-p)(t-1)^2\mu^2(\varepsilon+\rho+\sigma(\Phi)\|x_{-\max(s)}\|_2)^2\geq 0.
\end{align}
The condition $\delta_{ts}<\phi(t,p)$ guarantees that the inequality (\ref{eq.27}) is a second-order inequality for $\|h_{\max(s)}\|_2$, and the quadratic coefficient is less than zero. Consequently, we have
\begin{align}\label{eq.28}
\notag\|h_{\max(s)}\|_2\leq&\frac{1}{2[2-p-(t-1)\mu]^2(\phi(t,p)-\delta_{ts})}\{2(1-\frac{p}{2})(1-(t-1)\mu)(t-1)\mu\sqrt{1+\delta_{ts}}\theta\\
\notag&+\{[2(1-\frac{p}{2})(1-(t-1)\mu)(t-1)\mu\sqrt{1+\delta_{ts}}\theta]^2\\
\notag&+4[2-p-(t-1)\mu]^2(\phi(t,p)-\delta_{ts})(1-p)(t-1)^2\mu^2\theta^2\}^{\frac{1}{2}}\}\\
\overset{\text{(a)}}{\leq}&\{\frac{\phi(t,p)}{\phi(t,p)-\delta_{ts}}\frac{(2-p)(1-(t-1)\mu)}{2-p-(t-1)\mu}
\sqrt{1+\delta_{ts}}+\phi(t,p)\sqrt{\frac{1-p}{\phi(t,p)-\delta_{ts}}}\}\theta,
\end{align}
where (a) follows from the fact that $(u+v)^{1/2}\leq u^{1/2}+v^{1/2}$ for $u,~v\geq 0$, and $\theta=\varepsilon+\rho+\sigma(\Phi)\|x_{-\max(s)}\|_2.$ Combining with Lemmas \ref{lem.2} and \ref{lem.3}, it follows that $\|h_{-\max(s)}\|_2\leq\|h_{\max(s)}\|_2$.

Accordingly, it is not difficult to check that
\begin{align}\label{eq.29}
\notag\|\hat{x}^{l_2}-x\|_2&\leq\|\hat{x}^{l_2}-x_{\max(s)}\|_2+\|x-x_{\max(s)}\|_2\\
\notag&\leq\sqrt{2}\|h_{\max(s)}\|_2+\|x_{-\max(s)}\|_2\\
&\leq\sqrt{2}\{\frac{\phi(t,p)}{\phi(t,p)-\delta_{ts}}\frac{(2-p)(1-(t-1)\mu)}{2-p-(t-1)\mu}
\sqrt{1+\delta_{ts}}+\phi(t,p)\sqrt{\frac{1-p}{\phi(t,p)-\delta_{ts}}}\}\theta+\|x_{-\max(s)}\|_2.
\end{align}

If $ts$ is not an integer, we represent $t'=\lceil ts \rceil/s$, then $t's$ is an integer and $t<t'$. Due to $\frac{\partial \phi(t,p)}{\partial t}>0$, the function $\phi(t,p)$ is growing with $t\in(1,2]$. Thus, we have $\delta_{t's}=\delta_{ts}<\phi(t,p)<\phi(t',p)$. Analogous to the above proof, we can prove the result by working on $\delta_{t's}$.

\qed

\noindent \textbf{Proof of Theorem \ref{the.2}.}
Similar to the proof of the former noisy situation, for the case of noise type $\mathcal{B}=\mathcal{B}^{DS}(\rho)$, define $h=\hat{x}^{DS}-x_{\max(s)}$. We can derive
\begin{align}\notag
\|\Phi^{\top}(y-\Phi x_{\max(s)})\|_{\infty}\leq\|\Phi^{\top}(y-\Phi x)\|_{\infty}+\|\Phi^{\top}(\Phi x-\Phi x_{\max(s)})\|_{\infty}\leq\rho+\sigma^2(\Phi)\|x_{-\max(s)}\|_2\leq\varepsilon,
\end{align}
which reveals that $y-\Phi x_{\max(s)}\in\mathcal{B}^{DS}(\varepsilon).$ From the proof of Theorem \ref{the.1}, we have $\|h_{-\max(s)}\|_2\leq\|h_{\max(s)}\|_2$. By using the inequalities $\|x\|_p\leq\|x\|_q\leq n^{1/q-1/p}\|x\|_p$, $x\in\mathbb{R}^n$ and given $0<q<p\leq\infty$, we obtain $\|h_{-\max(s)}\|_1\leq\sqrt{N-ds}\|h_{\max(s)}\|_1$. Hence,
\begin{align}\label{eq.30}
\|h\|_1\leq(1+\sqrt{N-ds})\|h_{\max(s)}\|_1\leq(1+\sqrt{N-ds})\sqrt{ds}\|h_{\max(s)}\|_2.
\end{align}
Then,
\begin{align}\label{eq.31}
\notag\|\Phi h\|^2_2=&<h,\Phi^{\top}\Phi h>\leq\|h\|_1\|\Phi^{\top}\Phi h\|_{\infty}\\
\notag\leq&\|h\|_1(\|\Phi^{\top}(\Phi \hat{x}^{DS}-y)\|_{\infty}+\|\Phi^{\top}(\Phi x_{\max(s)}-y)\|_{\infty})\\
\leq&(1+\sqrt{N-ds})\sqrt{ds}\|h_{\max(s)}\|_2(\varepsilon+\rho+\sigma^2(\Phi)\|x_{-\max(s)}\|_2)
\end{align}
and
\begin{align}\label{eq.32}
\notag\left<\Phi h_{\max(s)},\Phi h\right>\leq&\left<h_{\max(s)},\Phi^{\top}\Phi h\right>
\leq\|h\|_1\|\Phi^{\top}\Phi h\|_{\infty}\\
\leq&\sqrt{ds}\|h_{\max(s)}\|_2(\varepsilon+\rho+\sigma^2(\Phi)\|x_{-\max(s)}\|_2).
\end{align}
Then, it follows that the below second-order inequality for $\|h_{\max(s)}\|_2$
\begin{align}\notag
\notag&[2-p-(t-1)\mu]^2(\phi(t,p)-\delta_{ts})\|h_{\max(s)}\|^2_2\\
\notag&-2(1-\frac{p}{2})(1-(t-1)\mu)(t-1)\mu\sqrt{ds}\|h_{\max(s)}\|_2
(\varepsilon+\rho+\sigma^2(\Phi)\|x_{-\max(s)}\|_2)\\
&-(1+\sqrt{N-ds})\sqrt{ds}(1-p)(t-1)^2\mu^2\|h_{\max(s)}\|_2(\varepsilon+\rho+\sigma^2(\Phi)\|x_{-\max(s)}\|_2)\leq 0.
\end{align}
Therefore,
\begin{align}
\notag\|h_{\max(s)}\|_2\leq&\frac{\phi(t,p)}{\phi(t,p)-\delta_{ts}}\{\frac{(2-p)(1-(t-1)\mu)\sqrt{ds}}{2-p-(t-1)\mu}
+(1+\sqrt{N-ds})\phi(t,p)(1-p)\sqrt{ds}\}\\
\notag&\times(\varepsilon+\rho+\sigma^2(\Phi)\|x_{-\max(s)}\|_2).
\end{align}
The remaining proof is similar with the case of noise type $\mathcal{B}^{l_2}$, we omit it here.

\qed

\noindent \textbf{Proof of Theorem \ref{the.4}.} Similar to the proof of Theorem 5.2 \cite{Baraniuk et al 2008}, from Lemma \ref{lem.4} and the union bound, for fixed $\delta\in(0,1)$, the measurement matrix $\Phi$ satisfies bock RIP (\ref{eq.3}) over $\mathcal{I}=\{d_1=d,d_2=d,\cdots,d_M=d\}$ with probability $\geq 1-2(\frac{12}{\delta})^{sd}(^M_s)e ^{-c_0(\delta/2)n}$, where $c_0(\delta/2)=\delta^2/16-\delta^3/48$ and $N=Md$. Therefore, we have
\begin{align}\label{eq.37}
\mathbb{P}(\delta_s<\delta)\geq1-2\left(\frac{12}{\delta}\right)^{sd}(^M_s)e ^{-c_0\left(\frac{\delta}{2}\right)n}
\end{align}
Corollary \ref{cor.1} shows that in the case of free-noise, the guarantee to exactly recover block $s$-sparse signals is $\delta_{ts}<\phi(t,p)~(1<t\leq 2)$. Hence, for $t\in(1,2]$, $\delta_{ts}<\phi(t,p)$ with probability
\begin{align}\label{eq.38}
\notag\mathbb{P}(\delta_{ts}<\phi(t,p))\geq&1-2\left(\frac{12}{\phi(t,p)}\right)^{tsd}(^M_{ts})e ^{-n\left(\frac{\phi^2(t,p)}{16}-\frac{\phi^3(t,p)}{48}\right)}\\
&\overset{\text{(a)}}{\leq}1-2e^{ts\left(d\log\frac{12}{\phi(t,p)}+\log\frac{e}{t}+\log\frac{M}{s}\right)
-n\left(\frac{\phi^2(t,p)}{16}-\frac{\phi^3(t,p)}{48}\right)},
\end{align}
where (a) follows from the inequality $(^u_v)\leq(eu/v)^v$ for integers $u>v>0$. When $M/s\to\infty$, to ensure that $\delta_{ts}<\phi(t,p)$ with overwhelming probability, the number of measurements must satisfy $n\geq ts\log\frac{N}{ds}/(\frac{\phi^2(t,p)}{16}-\frac{\phi^3(t,p)}{48})$.

\qed

\section{Conclusion}
\label{sec.6}

In recent years, the research of non-convex block-sparse compressed sensing has become a hot topic. This paper mainly discusses non-convex block-sparse compressed sensing by employing block RIP. We establish a sufficient condition that guarantee the stable and robust signal reconstruction via mix $l_2/l_p$ minimization method. Meanwhile, we present the upper bound estimation of recovery error. In addition, we give the number of samples needed to satisfy the sufficient conditions with high probability. Besides, we conduct a series of numerical experiments to show the verifiability of our results, and generally speaking, compared with other representative algorithms, the performance of Group-Lp algorithm is much better.


\end{document}